\documentclass[11pt]{article}
\usepackage[a4paper, margin=1in]{geometry}
\usepackage{amsmath, amssymb, amsthm}
\usepackage{graphicx}
\usepackage{hyperref}
\usepackage{titling}
\usepackage{tocloft}
\usepackage{xcolor}
\usepackage{slashed}
\usepackage{geometry}
\usepackage{longtable}
\usepackage{caption}
\usepackage{tabularx}
\usepackage{mathtools}
\usepackage{algorithm}
\usepackage{algorithmic}
\usepackage{float} 
\usepackage[all,cmtip]{xy} 
\usepackage{booktabs}
\usepackage{tikz}
\usepackage{mdframed}

\newtheorem{theorem}{Theorem}
\newtheorem{lemma}{Lemma}
\newtheorem{corollary}{Corollary}
\newtheorem{definition}{Definition}





\hypersetup{
    colorlinks=true,
    linkcolor=blue,
    filecolor=magenta,
    urlcolor=blue,
    citecolor=blue,
    pdftitle={Zero-Knowledge Proofs in Sublinear Space},
    pdfpagemode=FullScreen,
}

\title{\Large \textbf{Zero-Knowledge Proofs in Sublinear Space}\\[0.5in]}
\author{\Large Logan Nye, MD\\[0.4in]
Carnegie Mellon University School of Computer Science\\[0.1cm]
5000 Forbes Ave Pittsburgh, PA 15213 USA\\[0.5cm]
\texttt{lnye@andrew.cmu.edu}\\[0.2cm]
}
\date{}

\begin{document}

\maketitle
\vspace{1in}

\begin{center}
    \Large \textbf{Abstract}
\end{center}
\vspace{0.3in}
\noindent
Zero-knowledge proofs allow verification of computations without revealing private information. However, existing systems require memory proportional to the computation size, which has historically limited use in large-scale applications and on mobile and edge devices. We solve this fundamental bottleneck by developing, to our knowledge, the first proof system with sublinear memory requirements for mainstream cryptographic constructions. Our approach processes computations in blocks using a space-efficient tree algorithm, reducing memory from linear scaling to square-root scaling—from $\Theta(T)$ to $O(\sqrt{T} + \log T \log\log T)$ for computation size $T$—while maintaining the same proof generation time through a constant number of streaming passes. For widely-used linear polynomial commitment schemes (KZG/IPA), our method produces identical proofs and verification when using the same parameters and hashing only aggregate commitments into the challenge generation, preserving proof size and security. Hash-based systems also achieve square-root memory scaling though with slightly different proof structures. This advance enables zero-knowledge proofs on everyday devices and makes previously infeasible large computations verifiable, fundamentally democratizing access to privacy-preserving computation. Space-efficient zero knowledge proof systems create opportunities to reshape how trust is established in digital systems—from enabling widespread participation in decentralized networks to making verifiable scientific computing practical at unprecedented scales.

\vfill

\newpage

\section{Introduction}

Zero-knowledge proof systems, a cornerstone of modern cryptography, enable a prover to convince a verifier of a statement's truth without revealing anything beyond that validity. Recent constructions—such as SNARKs and STARKs~\cite{groth16, ben-sasson2018}—achieve succinct proofs and fast verification, catalyzing applications from private cryptocurrency transactions~\cite{bulletproofs} to verifiable machine learning and blockchain scalability.

Despite this progress, a fundamental bottleneck persists across practical systems: \emph{proof generation requires memory linear in the computation's trace length}. To prove a computation of trace length \(T\), many deployed PCS-based SNARK/Plonkish provers materialize the complete execution trace—i.e., machine state at every row in the algebraic intermediate representation (AIR)—consuming \(\Theta(T)\) memory. This is not merely an implementation artifact: the algebraic machinery encodes this full trace into polynomials and enforces global relations (e.g., transition and permutation constraints). The memory cost limits provable scale, blocks deployment on mobile/embedded devices, and inflates costs for web-scale computations.

\subsection{Prior Work and Technical Barriers}

Current ZKPs are fundamentally \emph{trace-bound}: their key operations require access to the entire trace.

\begin{itemize}
    \item \textbf{Polynomial interpolation/FFT:} Converting trace rows to evaluations or coefficients is typically implemented assuming random access to all \(T\) entries; standard pipelines materialize the full trace in memory during these transforms.
    \item \textbf{Permutation arguments:} Running-product accumulators enforce memory consistency via global products across all \(T\) rows, again tying the prover to the whole trace.
    \item \textbf{Global checks:} Boundary conditions and cross-block consistency induce access patterns that are difficult to implement in a strictly streaming fashion without redesign.
\end{itemize}

Prior attempts to reduce memory have improved constants or parallelization, but have not broken the \(\Theta(T)\) barrier in \emph{deployed} PCS-based systems. We overcome this limitation by replacing global trace materialization with local, recursive cryptographic operations that aggregate to the same global commitments.

\subsection{Main Contribution: A Sublinear-Space Prover}

We establish a precise bridge between complexity theory and applied cryptography. We show that the prover's commitment generation factors into an \emph{implicit tree evaluation} whose nodes perform local block computations and produce block commitments and auxiliary accumulators. Applying the recent space-efficient \texttt{Tree Evaluation} algorithm~\cite{cookmertz2024} yields, to our knowledge, the first \emph{sublinear-space} prover \emph{within mainstream PCS-based SNARK/Plonkish arithmetizations that preserves verifier cost and (for linear PCSs) the proof/transcript of the baseline prover}.

\begin{theorem}[Main Result, Informal]
There exists a zero-knowledge proof system in which, for a computation of trace length \(T\), the prover operates in \(O(\sqrt{T} + \log T \log\log T)\) space (equivalently, \(O(\sqrt{T})\) up to a \(\log\log\) factor). For \emph{linear} PCS instantiations (e.g., KZG or IPA) and when the transcript exposes only the baseline set of aggregated commitments (no per-block artifacts), the proof size and verifier complexity are unchanged and the transcript has the \emph{same distribution} as in the standard prover (and is \emph{bit-for-bit} identical in non-hiding variants).
\end{theorem}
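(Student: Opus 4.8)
The plan is to exhibit the claimed prover as a constant number of streaming passes over the execution trace, each pass realized as one invocation of the Cook--Mertz \texttt{Tree Evaluation} procedure~\cite{cookmertz2024} on an \emph{implicit} tree whose nodes carry only commitment-sized data, and then to show that the transcript produced coincides (as a distribution, and bit-for-bit in the non-hiding case) with that of the baseline PCS-based prover. Throughout we count only working memory: the execution trace (or the space-bounded code generating it) is available as a read-only stream that each pass traverses once. Fix a standard Plonkish/AIR arithmetization over an evaluation domain $H$ with $n=\Theta(T)$, and a \emph{linear} polynomial commitment scheme, so that in the Lagrange basis $\{L_i\}_{i\in H}$ one has $\mathrm{Commit}\!\big(\sum_i v_i L_i\big)=\sum_i v_i\,[L_i]$ for fixed SRS elements $[L_i]$ (and the resulting group element is independent of the basis used to compute it, hence equals the baseline commitment of the same polynomial). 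Partition $H$ into $B=\Theta(\sqrt T)$ contiguous blocks of size $\Theta(\sqrt T)$ and form a binary tree $\mathcal T$ of height $h=\tfrac12\log T+O(1)$ whose leaves are the blocks.

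The structural core is the claim that every polynomial the prover commits to --- the witness/trace columns, the permutation/lookup accumulator $z$, and the split quotient chunks $t_{\mathrm{lo}},t_{\mathrm{mid}},t_{\mathrm{hi}}$ --- is, in an appropriate evaluation representation, a vector of domain values produced by a \emph{block-local} map: the value at row $i$ depends only on rows $i$ and $i+1$ of a bounded number of columns (cross-block rotation couplings handled by overlapping blocks by a single row, or by threading the one boundary row through the node value), boundary/selector constraints are supported on a single block, and the quotient's division by $Z_H$ becomes pointwise once one works on a coset-evaluation domain. Consequently $\mathrm{Commit}(\cdot)$ of each such polynomial decomposes as a sum over blocks of \emph{block commitments}, each computed from $\Theta(\sqrt T)$ trace rows in $O(\sqrt T)$ space, and an internal node of $\mathcal T$ merely adds the two child partial sums. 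No FFT over the full domain is ever formed.

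The subtle case, and what I expect to be the main obstacle, is the grand-product polynomial $z$ of the permutation (and lookup) argument, whose evaluations $z_i=\prod_{j<i}\rho_j$ are \emph{prefix} products and so appear to demand the whole trace at once. The fix is to make the tree do composition rather than mere summation: assign to the subtree rooted at $v$ the pair $\Phi_v=(P_v,C_v)$, where $P_v=\prod_{j\in\mathrm{leaves}(v)}\rho_j$ and $C_v$ is the contribution of $v$'s rows to $\mathrm{Commit}(z)$ computed as if the prefix entering the leftmost leaf of $v$ equalled $1$; sibling subtrees combine by $\Phi_L\star\Phi_R:=\big(P_L P_R,\ C_L+P_L\cdot C_R\big)$, which is the composition law of the monoid of affine maps $t\mapsto P\,t+C$ and is therefore associative, so the root carries exactly $\big(\prod_j\rho_j,\ \mathrm{Commit}(z)\big)$ while no node ever holds more than one field element and one group element. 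The same device supplies the lookup running products. Each $\star$-combination uses $\mathrm{polylog}$ space and each leaf evaluation $O(\sqrt T)$ space; presenting these as space-bounded node subroutines to \texttt{Tree Evaluation} on this height-$O(\log T)$ binary tree with $O(1)$-sized (in $T$) node alphabet adds only the stated $O(\log T\log\log T)$ overhead, for total space $O(\sqrt T+\log T\log\log T)$ per pass. Deriving the Fiat--Shamir challenges exactly as the baseline does but feeding in only the aggregated root commitments, one pass emits the first-round commitments, a second (after $\beta,\gamma$) produces $\mathrm{Commit}(z)$, a third (after $\alpha$) the quotient chunks, and a fourth (after the evaluation point $\mathfrak z$) the batched evaluations and the KZG/IPA opening proof --- $O(1)$ passes, hence baseline time up to constants.

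Finally, the transcript. Linearity forces each root commitment to equal, as a group element, the baseline commitment of the same polynomial; therefore the hash challenges, the claimed evaluations, and the opening proofs all coincide with the baseline's, and the emitted proof consists of exactly the same objects, so proof size and verifier work are unchanged. In hiding variants the blinders are fixed at the start (so the passes remain streaming), drawn from the same distributions, and added as $\mathrm{Commit}\big(b(X)Z_H(X)\big)$ --- itself one extra tree-fold term --- so the transcript is identically distributed; in non-hiding variants there is no randomness and the transcript is bit-for-bit identical. Beyond the grand-product obstacle resolved above, the remaining care is purely bookkeeping: checking that block-boundary couplings, the first/last-row selectors, and the degree accounting for $t_{\mathrm{lo}},t_{\mathrm{mid}},t_{\mathrm{hi}}$ (and the handful of extra high-degree SRS elements the blinders touch) each contribute only $O(1)$ per-node state, so that none of them re-introduces an $\Theta(T)$ or even $\Theta(\sqrt T\cdot\log T)$ term into the space bound.
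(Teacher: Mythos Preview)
Your approach is correct and differs from the paper's in two structural respects. First, your implicit tree is a balanced binary \emph{aggregation} tree of height $\tfrac12\log T+O(1)$ whose $\Theta(\sqrt{T})$ leaves are the blocks, whereas the paper unrolls the block-\emph{dependency} DAG into a computation tree of height $h=B=\Theta(\sqrt{T})$ (each node $(m,t)$ depending on its same-register predecessor $(m,t{-}1)$ and $O(1)$ cross-register neighbours). Second---and this is the more interesting divergence---you handle the grand-product column $z$ via the affine-monoid composition $(P_L,C_L)\star(P_R,C_R)=(P_LP_R,\;C_L+P_L\cdot C_R)$, which lets $\mathrm{Commit}(z)$ be assembled along \emph{any} binary tree shape; the paper instead relies only on the sequential factorization $Z_{\mathrm{end}}^{(t)}=Z_{\mathrm{start}}^{(t)}\cdot F_t(\beta,\gamma)$ (its Lemma on blockwise permutation updates) and therefore must process blocks in strictly increasing time order, streaming the per-row values of $z$ and committing blockwise. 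Your monoid trick is more general (it is order-agnostic and parallelizable), while the paper's sequential scan is simpler and saves one scalar--group multiplication per merge. Both routes reach the stated space bound because in both trees the node payload is $O(\lambda)$ bits and the leaf work is $O(\sqrt{T})$; indeed, for your balanced tree a plain left-to-right fold with an $O(\log T)$ stack already suffices, so Cook--Mertz is invoked only to match the additive $\log T\log\log T$ term in the statement.

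One step you underspecify is the quotient. ``Pointwise division on a coset'' yields evaluations of $Q$ on $gH$, but the baseline commits to the chunks $t_{\mathrm{lo}},t_{\mathrm{mid}},t_{\mathrm{hi}}$ defined by \emph{coefficient} ranges, and the witness columns are only known on $H$; you still owe a blocked low-degree extension to the coset (to form $R$ there) and then a blocked transform back to coefficients to split into chunks---neither of which is ``block-local'' in your sense. The paper confronts this head-on in its Phase~D: it streams $R(\omega^i)$ into a blocked inverse NTT, runs the backward recurrence for division by $X^N-c$ on the emitted coefficient blocks, and commits those blocks on the fly. Your plan goes through once you plug in that step (or argue that your SRS supports direct Lagrange commitment on the extended coset and that the chunk split is recoverable there), but as written the sentence ``becomes pointwise once one works on a coset-evaluation domain'' hides exactly the one place where a genuine $O(\sqrt{T})$-workspace transform is required.
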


\begin{theorem}[Main Result, Formal]
Given a ZKP for arithmetic circuits instantiated with a PCS whose commit map is linear and computationally binding and a zero-knowledge instantiation (via a hiding PCS or standard masking), for any circuit of size/trace length \(T\) we construct a prover \(\mathcal{P}'\) such that:
\begin{enumerate}
    \item \(\mathcal{P}'\) uses \(O(\sqrt{T} + \log T \log\log T)\) space (equivalently, \(O(\sqrt{T})\) up to a \(\log\log\) factor).
    \item The distribution of proofs \(\pi'\) generated by \(\mathcal{P}'\) is \emph{identical} to that of the baseline prover \(\mathcal{P}\) on the same public randomness (or programmed random oracle): in non-hiding PCSs commitments are deterministic and match bit-for-bit; in hiding PCSs the per-register blinder used by \(\mathcal{P}'\) is the sum of \emph{independent} uniform block blinders and hence uniform, yielding identical commitment and opening distributions.
    \item Soundness and zero-knowledge of the underlying protocol are preserved.
\end{enumerate}
\end{theorem}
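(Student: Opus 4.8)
The plan is to realize $\mathcal{P}'$ as an execution of the space-efficient \texttt{Tree Evaluation} procedure of~\cite{cookmertz2024} over an \emph{implicit} computation tree whose node values are precisely the aggregatable cryptographic state that the baseline prover $\mathcal{P}$ accumulates, and then to establish the three conclusions separately: the space bound from the tree-evaluation analysis, the distributional identity from linearity of the commit map together with independence of the per-block blinders, and soundness/zero-knowledge by a verbatim reduction to $\mathcal{P}$. First I would fix the block decomposition: partition the length-$T$ trace (equivalently, the AIR/Plonkish witness columns) into $m=\lceil\sqrt{T}\,\rceil$ contiguous segments of $O(\sqrt{T})$ rows each, and build a shallow tree whose leaves are the segments and whose root aggregates them (optionally with one intermediate layer, height $O(1)$ and arity $O(\sqrt{T})$; or a balanced binary tree of height $O(\log T)$ if one prefers the $\log T\log\log T$ form). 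The structural claim to nail down here is \emph{local recomputability}: given the circuit description and public input, the content of segment $j$ together with its boundary/carry inputs can be regenerated in $O(\sqrt{T}+\mathrm{poly}(\lambda))$ space, so that no leaf ever needs the whole trace resident. For uniform machine executions this is the standard ``replay from a checkpoint'' argument; for a general circuit it is where we invoke the streaming-friendly presentation of the arithmetization (the same hypothesis that makes $\mathcal{P}$'s passes well-defined).

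Next I would define the value carried at each node to be the tuple consisting of: (i) for each committed polynomial $p$, the partial commitment $\mathrm{Commit}(p|_{\text{node's rows}})$, a single group element; (ii) the partial grand-product(s) of the permutation/lookup argument over the node's rows, threaded left-to-right as an $O(1)$-element field ``carry'' so that the running-product polynomial $Z$ is itself evaluated row-by-row and then committed by linearity; (iii) partial evaluations and commitments of the quotient polynomial $t(X)$ on the evaluation coset, computed row-by-row from the gate, permutation and boundary terms and split into its $O(1)$ degree-$<T$ pieces (again a linear operation); and (iv) the running transcript hash, which by hypothesis absorbs only these aggregate commitments. The crucial fold identity is then $\mathrm{Commit}(p)=\sum_{j=1}^{m}\mathrm{Commit}(p|_{\text{segment } j})$ by linearity of the commit map, so the root's value is exactly $\mathcal{P}$'s commitment vector; multiplicativity gives the analogous fold for grand products, realized as prefix products, which are tree-computable. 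Each internal fold costs $O(\sqrt{T}\cdot\mathrm{poly}(\lambda))$ time and space.

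I would then invoke the Cook--Mertz algorithm on this tree: with node-value bit-length $b=O(\sqrt{T}\cdot\mathrm{poly}(\lambda))$ and the shallow (or balanced) tree above, its space usage is $O(\sqrt{T}+\log T\log\log T)$, and since the tree has height $O(1)$ (resp.\ $O(\log T/\log\log T)$) the recomputation overhead multiplies $\mathcal{P}$'s running time by only a constant, i.e.\ $\mathcal{P}'$ makes a constant number of streaming passes — this gives conclusion~1 and the ``same proof-generation time'' claim. For conclusion~2: in a non-hiding PCS every $\mathrm{Commit}(p|_{\text{segment}})$ is deterministic, so the root value, and hence every Fiat--Shamir challenge and every opening, is bit-for-bit what $\mathcal{P}$ outputs on the same public randomness/random oracle, whence $\pi'=\pi$; in a hiding PCS, $\mathcal{P}'$ draws an \emph{independent} uniform blinder for each segment's contribution, so by linearity the blinder attached to each aggregate commitment is a sum of independent uniforms, hence uniform and distributed exactly as $\mathcal{P}$'s single blinder (and likewise for opening randomness), so $\pi'$ and $\pi$ are identically distributed. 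Conclusion~3 is then immediate: any soundness adversary or zero-knowledge distinguisher against $\mathcal{P}'$ transfers verbatim to $\mathcal{P}$, so computational binding of the PCS already yields baseline soundness and the baseline simulator already simulates $\mathcal{P}'$'s (identically distributed) transcript.

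The hard part will be Step~1 combined with the \emph{sequential} character of the permutation/lookup argument in Step~2: the grand-product accumulators are a prefix scan rather than a free reduce, so we must either thread them as an $O(1)$-element carry through a left-to-right ordering of the leaves — making part of the tree effectively a path, which we must check still fits the tree-evaluation space budget — or re-expose them as a genuine prefix-product tree; and we must confirm that regenerating a segment's boundary values does not secretly reintroduce $\Theta(T)$ space, which it does not provided the boundaries are themselves recomputed recursively rather than stored (precisely what the tree-evaluation recursion buys us). A secondary point, already granted by the theorem's hypothesis, is that the baseline protocol must derive all challenges from aggregate commitments only; if per-block artifacts leaked into the transcript, the bit-for-bit / identical-distribution guarantee would weaken to a mere ``accepted by the baseline verifier'' guarantee, still sufficient for soundness and zero-knowledge but not for the transcript equivalence.
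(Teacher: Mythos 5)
Your high-level strategy—decompose into $\sqrt{T}$ blocks, exploit linearity of the commit map to fold block commitments into per-polynomial aggregates, use independence of per-block blinders to argue distribution identity, and then transfer soundness/ZK verbatim because the transcripts match—is the right one, and items (2) and (3) of your plan match the paper almost exactly (the paper's Lemma on linear aggregation, the corollary on transcript identity under the aggregate-only Fiat--Shamir discipline, and the soundness/ZK reductions). However, the tree you propose is essentially the \emph{opposite} of the one the paper constructs, and this is not cosmetic.

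You build a \emph{shallow, high-fan-out} tree (height $O(1)$, arity $\Theta(\sqrt{T})$; or balanced binary, height $O(\log T)$) with one leaf per segment. The paper builds a \emph{deep, constant-arity} tree: the computation graph $G_C$ has a node $(m,t)$ for each register $m$ and time block $t$, with edges only from layer $t{-}1$ to layer $t$ (indegree $\le 1{+}r_{\text{reg}}=O(1)$), and the tree $\mathcal{T}_C$ unrolls this DAG so its \emph{height} is $B=O(T/b_{\text{blk}})=O(\sqrt{T})$ and its \emph{arity} is $O(1)$. This distinction matters precisely because of the difficulty you yourself flag as ``the hard part'': boundary values and permutation/lookup accumulators are a prefix scan, not a free reduce. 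A leaf in your shallow tree cannot be evaluated independently—it needs the boundary output of its left neighbor, which needs \emph{its} left neighbor, and so on—so a shallow reduce tree simply does not encode the dependency structure; the AIR transition map is a nonlinear polynomial map, so there is no ``prefix-product tree'' you can rebalance (unlike a genuinely associative monoid fold). The paper resolves this by building the tree \emph{from} the layered dependency structure, so the sequential chain \emph{is} the tree's depth, and the Cook--Mertz stack term $O(\log|\mathcal{T}_C|\log\log|\mathcal{T}_C|)$ applies with $|\mathcal{T}_C|=\Theta(T/b_{\text{blk}})$. Your remark that the carries ``make part of the tree effectively a path'' is not a caveat to be checked; it is the actual tree in the paper.

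A second, related slip: you set the Cook--Mertz node-value length to $b_{\text{val}}=O(\sqrt{T}\cdot\mathrm{poly}(\lambda))$. But a node's \emph{value}—what is buffered and passed up—is only the one-hot commitment vector and the $\mathsf{aux}$ state (end-of-block boundaries, permutation/lookup accumulators), which is $O(\lambda+k)=O(\mathrm{poly}(\lambda))$ bits; the $O(\sqrt{T})$ is the \emph{oracle workspace} to evaluate a block, which is accounted for separately. Keeping $b_{\text{val}}=O(\lambda)$ is essential: with a shallow tree of arity $\Theta(\sqrt{T})$ and $b_{\text{val}}=O(\sqrt{T}\cdot\mathrm{poly}(\lambda))$, the Cook--Mertz buffer term $\Delta\cdot b_{\text{val}}$ is $\Theta(T)$, which is not sublinear. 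Finally, the quotient polynomial is not just ``another linear fold'': $Q$ is defined by the polynomial identity $R=Z_H\cdot Q$, and pointwise division on $H$ divides by zero. The paper handles this with a blocked inverse NTT to produce coefficient blocks of $R$ followed by a streaming backward recurrence for division by $X^N-c$, committing blockwise in coefficient basis and aggregating linearly; your sketch glosses over this step, which is one of the places sublinearity is genuinely nontrivial to preserve.
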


This represents a quadratic improvement in prover space, transforming terabyte-scale memory requirements into megabytes and making large-scale verifiable computation practical.

\subsection{Technical Overview}

This approach synthesizes ideas from complexity theory and cryptography in three steps:

\begin{enumerate}
    \item \textbf{Structural decomposition.} Following Williams~\cite{williams2025}, we linearize the circuit into an AIR trace and impose a block-respecting structure with bounded cross-register read-degree and \(k=O(1)\) registers (typical fixed-column AIRs), yielding a computation graph with bounded indegree.
    \item \textbf{Tree Evaluation Equivalence.} We prove that the prover's commitment generation factors into a recursive function over an implicit computation tree: each node consumes children's commitments/auxiliary data and outputs a local block commitment (Section~4). Under a linear PCS, the resulting vector of trace-commitments has the \emph{same distribution} as the baseline prover (and is bit-for-bit identical in non-hiding variants).
    \item \textbf{Space-efficient evaluation.} We evaluate the implicit tree using the Cook--Mertz algorithm~\cite{cookmertz2024}. Balancing block size \(b_{\text{blk}}\) against the \(\log|\mathcal{T}_C|\cdot\log\log|\mathcal{T}_C|\) stack term yields overall space \(O\!\big(b_{\text{blk}} + \log|\mathcal{T}_C|\cdot\log\log|\mathcal{T}_C| \big)\) for the traversal plus \(O(b_{\text{blk}})\) for local work, optimized at \(b_{\text{blk}}=\Theta(\sqrt{T})\), giving \(O(\sqrt{T} + \log T \log\log T)\) space (Section~5). The prover time is \(O(T\cdot\mathrm{polylog}\,N)\cdot \mathrm{poly}(\lambda)\) for evaluation-basis KZG/IPA with domain size \(N\!\ge\!T\) (and \(N=\mathrm{poly}(T)\) in typical settings); for hash-based STARK/FRI the standard \(O(T\log T)\) hashing terms apply.
\end{enumerate}

\subsection{Significance}

This work establishes a bridge between the space complexity of abstract algorithms and that of concrete cryptographic protocols. It suggests that when the core relations of a protocol can be expressed as a function compatible with a space-efficient evaluation scheme for a computational model, the protocol can inherit that scheme's space complexity. The quadratic reduction in prover memory enables new deployment scenarios—on-device proving, broader decentralization, and verifiable large-scale computation—impacting the economics and feasibility of ensuring computational integrity and privacy.

\bigskip

\section{Preliminaries}

In this section, we define the core concepts and establish the notation used throughout the paper. We model computations as arithmetic circuits, formalize the properties of polynomial commitment schemes, and define the Tree Evaluation problem.

\subsection{Notation}

We denote the security parameter by $\lambda$. Let $\mathbb{F}$ be a finite field with $|\mathbb{F}|\ge \mathrm{poly}(T)$ so that Schwartz--Zippel error is at most $1/\mathrm{poly}(T)$. Cryptographic parameters are chosen so that the commitment group $\mathcal{C}$ (used by the PCS below) has order at least $2^{\lambda}$. When convenient, we assume $\lambda \ge c\log T$ for simplifying asymptotic bounds, which implies $1/\mathrm{poly}(T)=2^{-\Omega(\lambda)}$; this relationship is not required but simplifies notation.

If $T \mid (|\mathbb{F}|{-}1)$ we fix a primitive $T$-th root of unity $\omega\in\mathbb{F}$ and use the domain $H=\{\omega^0,\ldots,\omega^{T-1}\}$ with vanishing polynomial $Z_H(X)=X^T-1$. Otherwise we choose a domain of size $N\ge T$ either by (i) passing to an extension field $\mathbb{K}\supseteq\mathbb{F}$ that contains a multiplicative subgroup of order $N$, or (ii) using a size-$N$ coset domain inside $\mathbb{F}$ (typically $N$ a power of two), and we pad the trace to length $N$. In both cases, we ensure the vanishing polynomial has the form $Z_H(X)=X^N-c$ for some constant $c$. All polynomial interpolations below are then taken over the chosen domain of size $N$, with degree $<N$; when $N=T$ this specializes to the first case.

All trace and fixed/selector polynomials we manipulate have degree $< N$. The Plonkish quotient polynomial $Q$ satisfies $\deg Q < cN$ for a small constant $c$ determined by the constraint set (typically $c\in\{2,3,4\}$). Accordingly, we provision the PCS/SRS for a maximum degree $D_{\max}\ge cN$. Schwartz--Zippel error for checks at a random point is then at most $D_{\max}/|\mathbb{F}|\le 1/\mathrm{poly}(T)$. A function $\nu: \mathbb{N} \to \mathbb{R}$ is \emph{negligible} if for every positive polynomial $p(\cdot)$, there exists an integer $n_0$ such that for all $n > n_0$, $\nu(n) < 1/p(n)$.

We use $t$ to denote the total number of steps of a Turing machine computation and $T$ for the size of the linearized circuit (equivalently, the number of rows in the execution trace). We use $k$ for the number of memory regions or registers. We use $\mathrm{poly}(\lambda)$ to denote an unspecified function that is polynomial in $\lambda$. We use $b_{\text{blk}}$ to denote computation block size and $b_{\text{val}}$ to denote the bit-length of values in tree evaluation.

\subsection{Computational Model: Arithmetic Circuits}

We model computations as \emph{arithmetic circuits} over a finite field $\mathbb{F}$. This model is general and directly corresponds to the relations proven by many modern ZKP systems. We first establish the bridge between Turing machine computations and arithmetic circuits.

\begin{lemma}[Turing Machine to Arithmetic Circuit Compilation]
\label{lem:tm-to-circuit}
Any $t$-step computation on a single-tape Turing machine can be efficiently compiled into an arithmetic circuit of size $T=O(t \log t)$ over a finite field $\mathbb{F}$. This transformation preserves the computational semantics and allows the structural decomposition techniques of Williams~\cite{williams2025} to be applied to the circuit model.
\end{lemma}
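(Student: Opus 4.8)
The plan is to establish Lemma~\ref{lem:tm-to-circuit} by invoking and adapting the classical Turing-machine-to-circuit transformations, then verifying that the output has the block-respecting structure needed downstream. First I would recall the standard oblivious simulation: any $t$-step single-tape Turing machine can be converted to a $2$-tape oblivious Turing machine running in $O(t\log t)$ steps (Hennie--Stearns / Pippenger--Fischer), whose head movements depend only on the step index, not on the input. The point of obliviousness is that the resulting computation graph — which cell is read/written at each time step — is fixed in advance, so the dependency pattern is data-independent and can be hard-wired into a circuit.

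Next I would encode each configuration of the oblivious machine (tape contents in a bounded window, head position, internal state) as a tuple of field elements, using a fixed-width binary-style encoding over $\mathbb{F}$; since the alphabet and state set are of constant size, each configuration slice costs $O(1)$ field elements, and one "row" of the execution trace records the local state at one time step. The transition function of a Turing machine with constant alphabet and state set is a function on $O(1)$ bits, hence expressible as a constant-size arithmetic gadget over $\mathbb{F}$ (e.g.\ via multilinear extension or an explicit low-degree interpolation), so the update from row $i$ to row $i+1$ is enforced by $O(1)$ constraints of bounded degree. Chaining these over all $O(t\log t)$ steps yields an arithmetic circuit (equivalently, an AIR / execution trace) of size $T=O(t\log t)$, with correctness following from the step-by-step faithfulness of the encoding. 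I would then note that because the underlying machine is oblivious with $O(1)$ tapes, the induced computation graph has bounded indegree and the trace is naturally partitionable into blocks with bounded cross-block read-degree and $k=O(1)$ registers — precisely the block-respecting hypothesis that Williams~\cite{williams2025} requires — which is the ``allows the structural decomposition techniques'' clause of the statement. To make this rigorous I would cite Williams' formulation directly and check that the circuit produced here meets its input conditions.

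I expect the main obstacle to be not any single hard computation but the bookkeeping of the reduction chain and, in particular, showing that the oblivious simulation's movement pattern genuinely yields \emph{bounded cross-register read-degree} in the sense used later in Section~4, rather than merely bounded indegree in the raw computation DAG. The subtlety is that the $O(\log t)$ overhead in oblivious simulation comes from shuttling data across the tape, which can create long-range dependencies between distant rows; one must argue that after block-decomposition these dependencies touch only $O(1)$ other blocks per block (or are otherwise absorbable into the tree-evaluation node functions of Section~4 without blowing up the per-node value length $b_{\text{val}}$). Handling this cleanly likely requires either choosing the block boundaries to align with the oblivious head-sweep structure, or invoking a block-respecting version of the simulation (as in the space-bounded simulation literature) so that the stated compilation is exactly the interface Williams' decomposition expects; I would make this alignment explicit rather than leave it implicit.
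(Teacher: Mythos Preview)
Your proposal is correct and complete enough for a proof sketch at the level the paper aims for, but it takes a somewhat different route from the paper's own argument. The paper's sketch does \emph{not} pass through Hennie--Stearns/Pippenger--Fischer oblivious simulation; instead it encodes the $t$-step run directly as a tableau and spends the $O(\log t)$ overhead on \emph{index decoding}: at each of the $t$ steps, a $O(\log t)$-gate selector circuit picks out the active cell/head position, and local row/column consistency is then enforced by $O(1)$ arithmetic constraints. Thus the $\log t$ factor arises from per-step address decoding rather than from the time blowup of an oblivious simulation.

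Your route---make the machine oblivious first, then encode each of the $O(t\log t)$ oblivious steps with a constant-size gadget---has the advantage that the data-independent head schedule directly yields the bounded-indegree, block-respecting structure you care about downstream, so your extra paragraph worrying about cross-register read-degree is largely answered by construction (each oblivious step touches $O(1)$ fixed cells, hence each block boundary passes $O(1)$ values). The paper's tableau-plus-decoder route is more self-contained (no external simulation theorem) but leaves the block-respecting property slightly more implicit, since the selector circuitry fans in from $O(\log t)$ index bits rather than from a fixed constant number of predecessor cells. Either approach suffices for a lemma that the paper treats as essentially folklore; your concern about aligning block boundaries with the oblivious sweep structure is well-placed but would be overkill relative to the level of detail the paper actually provides.
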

\begin{proof}[Proof Sketch]
This follows from standard oblivious-simulation constructions: encode the time-$t$ TM run as a tableau and select the active cell/head position via $O(\log t)$-bit index decoding per step. Each local consistency check (adjacent rows/columns) is compiled into $O(1)$ arithmetic constraints over $\mathbb{F}$. The resulting circuit has size $T=O(t\log t)$ and evaluates gate-by-gate to reproduce the original computation. (With constant-tape or RAM models under oblivious access, one can obtain $T=O(t)$.)
\end{proof}

\begin{definition}[Arithmetic Circuit]
An arithmetic circuit $C: \mathbb{F}^{n+n'} \to \mathbb{F}^m$ is a directed acyclic graph where nodes are called gates. Gates have an indegree of 0 (input gates or constant gates), 1 (unary operations), or 2 (binary operations). Input gates are labeled with input variables from a public statement $x \in \mathbb{F}^n$ and a private witness $w \in \mathbb{F}^{n'}$. Internal gates are labeled with arithmetic operations from $\{+, \times, -\}$ over $\mathbb{F}$; wires may fan out. A subset of gates are designated as output gates. The size of the circuit, after fixing a topological order, is denoted $T$ and equals the total number of gates.
\end{definition}

\begin{definition}[Execution Trace]
For an arithmetic circuit $C$ of size $T$ and a given assignment to its input gates, the \emph{execution trace} $\tau$ is a matrix $W \in \mathbb{F}^{T \times k}$ with $k$ columns (registers) and $T$ rows (computation steps) indexed by $i\in\{0,\ldots,T{-}1\}$. Entry $W[i,j]$ is the value in register $j$ at step $i$, following the Algebraic Intermediate Representation (AIR). The correctness of a computation is captured by a fixed set of \emph{transition constraints}: there exist polynomials $P_1,\ldots,P_r$ such that for each $i \in \{0,\ldots,T{-}2\}$ and each $\ell\in[r]$,
\[
P_\ell\big(W[i,\cdot], W[i{+}1,\cdot]\big) = 0,
\]
together with initialization and terminal constraints on $W[0,\cdot]$ and $W[T{-}1,\cdot]$. A valid trace $\tau$ satisfies all these constraints.
\end{definition}

In standard ZKP constructions, the execution trace $\tau$ is encoded as a collection of \emph{trace polynomials}, one per register/column. Specifically, fix a domain of size $N\ge T$ as described in the notation section. For each $j\in[k]$, define $F_j(X)$ to be the unique polynomial of degree $< N$ that interpolates the $j$-th column over the domain $H=\{\omega^0,\ldots,\omega^{N-1}\}$ (or the chosen coset domain), i.e., $F_j(\omega^i) = W[i,j]$ for $i \in \{0,\ldots,T-1\}$ and equals the padded values for $i \in \{T,\ldots,N-1\}$. The prover's main task is to commit to these trace polynomials.

\subsection{Polynomial Commitment Schemes}
\label{subsec:pcs}

We recall the standard interface and security properties for polynomial commitment schemes (PCSs) and fix the algebraic conventions we rely on throughout.

\begin{definition}[Polynomial Commitment Scheme]
A PCS over a field $\mathbb{F}$ and maximum degree $D$ is a tuple of algorithms $(\mathsf{Setup}, \mathsf{Commit}, \mathsf{Open}, \mathsf{Verify})$:
\begin{itemize}
    \item $\mathsf{Setup}(1^\lambda, D) \to pp$: randomized, outputs public parameters $pp$.
    \item $\mathsf{Commit}(pp, f; r) \to \mathsf{com}$: (possibly randomized) commitment to $f\in\mathbb{F}[X]$ with $\deg f\le D$ using randomness $r$.
    \item $\mathsf{Open}(pp, f, z; r) \to \pi$: opening proof for $y=f(z)$, using the same $r$ if needed.
    \item $\mathsf{Verify}(pp, \mathsf{com}, z, y, \pi)\in\{0,1\}$: verifies the opening.
\end{itemize}
\end{definition}

\paragraph{Correctness.}
For all valid inputs, if $\mathsf{com}\!\leftarrow\!\mathsf{Commit}(pp,f;r)$ and $\pi\!\leftarrow\!\mathsf{Open}(pp,f,z;r)$ with $y=f(z)$, then $\mathsf{Verify}(pp,\mathsf{com},z,y,\pi)=1$.

\begin{definition}[Binding]
The \emph{binding} game $\mathsf{Game}_{\mathcal{A}}^{\mathrm{Bind}}(1^\lambda)$ proceeds as usual: $\mathcal{A}$ receives $pp\!\leftarrow\!\mathsf{Setup}(1^\lambda,D)$ and wins if it outputs $(\mathsf{com},z,y,\pi,y',\pi')$ with $y\neq y'$ but $\mathsf{Verify}(pp,\mathsf{com},z,y,\pi)=\mathsf{Verify}(pp,\mathsf{com},z,y',\pi')=1$. The scheme is computationally binding if every PPT $\mathcal{A}$ wins with probability negligible in~$\lambda$.
\end{definition}

\begin{definition}[Hiding]
In the \emph{hiding} game $\mathsf{Game}_{\mathcal{A}}^{\mathrm{Hide}}(1^\lambda)$ the challenger commits to one of $\{f_0,f_1\}$ chosen uniformly and $\mathcal{A}$ must guess the bit; the scheme is computationally hiding if $|\Pr[b{=}b']-\tfrac12|$ is negligible.
\end{definition}

\paragraph{Basis and SRS discipline.}
Throughout we \emph{fix} (and match the baseline prover's) structured reference string (SRS) and the polynomial \emph{basis} in which commitments are formed—either coefficient basis or an evaluation/Lagrange basis over a public domain $H=\{\omega^0,\ldots,\omega^{N-1}\}$. When we commit in evaluation basis, we view a polynomial $f$ as its evaluation vector $(f(\omega^i))_{i\in[N]}$; when we commit in coefficient basis, we view $f$ by its coefficients. All linear relations claimed below are taken with respect to this fixed basis.

\paragraph{Linearity in message and randomness.}
We assume the PCS commit map is \emph{linear} in the committed message and (if hiding) in its randomness:
\begin{equation}
\label{eq:pcs-linearity}
\mathsf{Commit}(pp,f;r_f) + \mathsf{Commit}(pp,g;r_g)
=\ \mathsf{Commit}\!\left(pp,\ f{+}g;\ r_f{+}r_g\right).
\end{equation}
Linearity is with respect to the \textbf{fixed commitment basis}. For non-hiding KZG, commitments are deterministic and~\eqref{eq:pcs-linearity} holds with $r_f=r_g=0$. For hiding KZG/IPA variants, $r_f,r_g\!\gets\!\mathbb{F}$ independent uniform implies $r_f{+}r_g$ is uniform, preserving commitment distributions under linear aggregation. To achieve \emph{bit-identity} in hiding PCSs, choose block blinders $r_1,\ldots,r_B$ so $\sum r_i = r_{\text{base}}$ (sample $B{-}1$ independently, fix the last as the difference); otherwise, distribution-identity suffices.

\begin{lemma}[Linear aggregation of block commitments]
\label{lem:linear-agg}
Let $F=\sum_{t=1}^B G_t$ be a decomposition into block polynomials (all represented in the fixed basis). If $\mathsf{com}_t=\mathsf{Commit}(pp,G_t;r_t)$, then
\[
\sum_{t=1}^B \mathsf{com}_t\ =\ \mathsf{Commit}\!\left(pp,\ \sum_{t=1}^B G_t;\ \sum_{t=1}^B r_t\right).
\]
If the scheme is hiding and each $r_t$ is independent uniform, then $\sum_t r_t$ is uniform, so the RHS has the same distribution as a baseline commitment to $F$ with fresh randomness.
\end{lemma}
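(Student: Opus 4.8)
The plan is to prove the commitment identity by induction on the number of blocks $B$, using the two-term linearity hypothesis \eqref{eq:pcs-linearity}, and then to read off the distributional claim from the structure of the randomness space. For $B=1$ there is nothing to prove. Assuming the claim for $B-1$ blocks, we have $\sum_{t=1}^{B-1}\mathsf{com}_t=\mathsf{Commit}(pp,\ \sum_{t=1}^{B-1}G_t;\ \sum_{t=1}^{B-1}r_t)$. Since every $G_t$, and hence every partial sum $\sum_{t\le s}G_t$, is represented in the fixed commitment basis and has degree below the provisioned SRS bound $D$ (all trace/quotient polynomials and their block pieces satisfy $\deg<N\le D$), the relation \eqref{eq:pcs-linearity} applies with $f=\sum_{t=1}^{B-1}G_t$, $g=G_B$, $r_f=\sum_{t=1}^{B-1}r_t$, $r_g=r_B$. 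Adding $\mathsf{com}_B=\mathsf{Commit}(pp,G_B;r_B)$ to the inductive hypothesis and invoking \eqref{eq:pcs-linearity} yields $\sum_{t=1}^{B}\mathsf{com}_t=\mathsf{Commit}(pp,\ \sum_{t=1}^{B}G_t;\ \sum_{t=1}^{B}r_t)=\mathsf{Commit}(pp,F;\sum_t r_t)$, closing the induction. (Equivalently, in the fixed basis $\mathsf{Commit}(pp,\cdot\,;\cdot)$ is a group homomorphism into $\mathcal{C}$ in its (message, randomness) argument, and the statement is just additivity of a homomorphism over a finite sum.)

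For the distributional part, the randomness space is the additive group of $\mathbb{F}$ (or a fixed $\mathbb{F}$-vector space of blinders), a finite abelian group. If the $r_t$ are independent and each marginally uniform, then $\sum_{t=1}^{B}r_t$ is uniform by the one-time-pad argument: conditioned on $r_1,\dots,r_{B-1}$, the sum is a fixed translate of the uniform $r_B$. Hence by the identity above the right-hand side equals $\mathsf{Commit}(pp,F;r)$ for $r$ uniform, i.e.\ it is distributed exactly as a baseline commitment to $F$ with fresh randomness; the induced opening distributions agree by the same reasoning, since openings are a deterministic function of $(F,r)$ with this aggregated $r$. For the stronger bit-for-bit variant one instead samples $r_1,\dots,r_{B-1}$ independently uniform and sets $r_B:=r_{\text{base}}-\sum_{t=1}^{B-1}r_t$, forcing $\sum_t r_t=r_{\text{base}}$ identically while keeping each $r_t$ marginally uniform; in the non-hiding case all $r_t=0$ and the identity above is literally bit-identical.

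There is little genuine mathematical difficulty here; the one point that requires care is bookkeeping rather than algebra — one must confirm that the block decomposition $F=\sum_t G_t$ produced by the tree-evaluation factorization (Section~4) is taken in the \emph{same} fixed basis and stays within the \emph{same} SRS degree bound $D$ as the baseline commitment, so that every partial sum is a legal argument to $\mathsf{Commit}$ and \eqref{eq:pcs-linearity} chains without ever leaving the supported domain. If blocks were instead committed in different bases or with degrees exceeding $D$, homomorphy would fail and the aggregated commitment need not match the baseline; ensuring the decomposition respects the basis and degree discipline of Section~\ref{subsec:pcs} is precisely the invariant the surrounding construction is designed to maintain, and this lemma is then immediate.
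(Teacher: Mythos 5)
Your proof is correct and follows the same route as the paper's: repeated (inductive) application of the two-term linearity relation~\eqref{eq:pcs-linearity}, plus the standard observation that a sum of independent uniform elements of a finite abelian group is uniform. The paper states this in two sentences; your version merely spells out the induction, the one-time-pad argument, and the (worthwhile but implicit) bookkeeping that all partial sums stay within the fixed basis and SRS degree bound.
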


\begin{proof}
Apply~\eqref{eq:pcs-linearity} repeatedly. Uniformity of the sum in a finite field follows from independence.
\end{proof}

\begin{corollary}[Identical transcript for linear PCSs]
\label{cor:identical-linear}
Suppose the baseline prover commits to each trace/derived polynomial $F_j$ in the same basis/SRS fixed above. If $F_j=\sum_{t} G_{t,j}$ is the block decomposition and the block commitments $\{\mathsf{Commit}(G_{t,j};r_{t,j})\}_t$ are aggregated as in Lemma~\ref{lem:linear-agg}, then for non-hiding PCSs the commitment to $F_j$ is \emph{bit-for-bit} identical to the baseline; for hiding PCSs it has the \emph{same distribution}. \emph{Critical requirement:} transcript identity assumes \textbf{only the aggregated per-polynomial commitments} (not per-block commitments) are hashed into Fiat--Shamir in the same order and encoding as the baseline; internal per-block commitments are computation artifacts and never appear in the transcript.
\end{corollary}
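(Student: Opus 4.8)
The plan is to reduce the statement to Lemma~\ref{lem:linear-agg} and then propagate the resulting commitment-level identity through a round-by-round hybrid over the Fiat--Shamir transcript. First I would fix the public randomness (equivalently, the programmed random oracle) shared by the baseline prover $\mathcal{P}$ and the aggregating prover $\mathcal{P}'$, and prove by induction on the protocol rounds the invariant: conditioned on all earlier messages of $\mathcal{P}'$ agreeing with (being identically distributed to) those of $\mathcal{P}$, the round-$i$ message of $\mathcal{P}'$ agrees with (is identically distributed to) that of $\mathcal{P}$. For the base case --- the first batch of trace commitments --- the block decomposition gives $F_j=\sum_t G_{t,j}$ in the fixed basis, so Lemma~\ref{lem:linear-agg} yields $\sum_t \mathsf{Commit}(pp,G_{t,j};r_{t,j}) = \mathsf{Commit}(pp,F_j;\sum_t r_{t,j})$. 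In the non-hiding case every blinder is $0$ and the right-hand side is the unique deterministic group element that $\mathcal{P}$ outputs, so the emitted bytes coincide provided $\mathcal{P}'$ uses the same SRS, basis, and serialization. In the hiding case, mutual independence of the block blinders makes each $\sum_t r_{t,j}$ uniform over $\mathbb{F}$ and the vector $(\sum_t r_{t,j})_j$ jointly independent, matching the distribution of $\mathcal{P}$'s fresh per-register blinders; alternatively, fixing the last block blinder so that $\sum_t r_{t,j}=r_{\mathrm{base},j}$ restores bit-identity.

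For the inductive step I would invoke the critical hypothesis of the statement: the Fiat--Shamir hash (and the verifier) ingest \emph{only} the aggregated per-polynomial commitments, in the same order and encoding as the baseline, never the internal per-block commitments, which are treated as scratch data. Hence the $i$-th challenge is a fixed deterministic function of messages that, by the induction hypothesis, coincide with $\mathcal{P}$'s, so $\mathcal{P}'$ and $\mathcal{P}$ derive the same challenge $\alpha_i$. Any derived polynomial in the following round --- e.g.\ the Plonkish quotient $Q$, whose coefficients depend on $\alpha_i$ and earlier challenges --- is then the \emph{same} polynomial for both provers, its block decomposition $Q=\sum_t Q_t$ again sums correctly, and Lemma~\ref{lem:linear-agg} makes its aggregated commitment match (bit-for-bit non-hiding, in distribution hiding). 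Opening messages are a deterministic function of the (identical) committed polynomials, evaluation points, and opening randomness; reusing $\sum_t r_{t,j}$ consistently across $\mathsf{Commit}$ and $\mathsf{Open}$ shows the opening proof for $F_j$ equals the sum of the block opening proofs, hence matches $\mathcal{P}$'s opening at the same randomness value. Any zero-knowledge masking is handled exactly like the commitment blinders. Chaining over all rounds gives that $\pi'$ equals $\pi$ bit-for-bit in non-hiding instantiations and has the same distribution in hiding ones, which is the claim; soundness and zero-knowledge then transfer verbatim since an identically distributed transcript cannot alter the success probability of any verifier or extractor.

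I expect the main obstacle to be the hiding case. Two points need care: (i) arguing that ``$\mathcal{P}'$ uses the sum of independent block blinders'' reproduces the baseline distribution \emph{exactly} (not merely statistically close), jointly with the rest of the transcript, which reduces to noting that a sum of mutually independent uniform field elements is uniform and independent of anything it is not correlated with; and (ii) ensuring this identity survives the opening phase, which requires the hiding-PCS opening algorithm to be itself linear in the randomness --- true for the KZG/IPA variants we target --- so that block openings aggregate to the baseline opening. A secondary, essentially bookkeeping, obstacle is serialization discipline: genuine bit-identity forces $\mathcal{P}'$ to emit the aggregated group element in precisely the byte encoding the baseline uses, which I would record as an explicit hypothesis rather than prove.
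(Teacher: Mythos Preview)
The paper gives no separate proof of this corollary; it is stated as an immediate consequence of Lemma~\ref{lem:linear-agg}. Your argument is correct, but it proves considerably more than the corollary actually asserts. The operative claim is only that the aggregated commitment to each $F_j$ matches the baseline commitment, which is literally Lemma~\ref{lem:linear-agg} specialized to $F_j=\sum_t G_{t,j}$, together with the observation that $\sum_t 0=0$ in the non-hiding case and that a sum of independent uniform field elements is uniform in the hiding case. The ``critical requirement'' clause is a stated hypothesis for downstream use, not a conclusion to be derived here.

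Your round-by-round hybrid --- propagating commitment identity through Fiat--Shamir challenges, derived polynomials such as $Q$, and openings --- is essentially the content the paper spreads over Theorem~\ref{thm:equivalence}, the phase-by-phase streaming schedule of Section~5, and the security reductions of Section~6. So you have compressed several later arguments into this corollary, which is fine but not what the paper intends at this point. One genuine divergence worth flagging: you handle openings by summing block opening proofs, relying on linearity of $\mathsf{Open}$ in both message and randomness. The paper avoids this extra assumption entirely: it recomputes each $f(\zeta)$ by a single streaming barycentric (or Horner) pass and then invokes the baseline $\mathsf{Open}$ on the full polynomial with the aggregate blinder. Your route is valid for KZG, where the witness polynomial $(f(X)-f(z))/(X-z)$ is linear in $f$, but the claim ``true for the KZG/IPA variants we target'' is too quick for IPA, whose interactive opening transcript is not a linear function of the committed vector; there you would need the paper's streaming-evaluation approach instead.
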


\paragraph{Quotient construction (coefficient basis, streaming).}
Let $R$ denote the randomized linear combination of constraint residual polynomials and let $Z_H(X)=X^N-c$ be the vanishing polynomial of $H$ (for a subgroup or coset). Define $Q$ by the polynomial identity
\[
R(X)\;=\;Z_H(X)\cdot Q(X),
\]
and \emph{commit to $Q$ in coefficient basis} without materializing its full coefficient vector:

(i) Compute the coefficients of $R$ in \emph{blocks} via a blocked inverse NTT (or any equivalent out-of-core linear transform from evaluations to coefficients), using $O(b_{\text{blk}})$ working memory and $O(N\log N)$ field operations overall.

(ii) As each high-to-low coefficient block of $R$ becomes available, perform the simple backward recurrence for division by $X^N-c$; scanning $i=d_R,\ldots,N$ yields $q_{i-N} \leftarrow r_i + c\, q_i$, and for $i<N$ consistency enforces $r_i = -c\, q_i$. Stream the resulting $q$-coefficients directly into the commitment MSM.

By Lemma~\ref{lem:linear-agg}, these block commitments aggregate to the baseline commitment to $Q$ (for linear PCSs), and the approach avoids divide-by-zero at points of $H$.

\paragraph{Streaming openings at a point.}
To open a committed polynomial at a challenge point $\zeta\!\notin\!H$, a Lagrange/evaluation-basis prover can compute $f(\zeta)$ in a \emph{single streaming pass} using barycentric evaluation, maintaining $O(1)$ accumulators. For a multiplicative subgroup (or coset) with $Z_H(X)=X^N-c$, we have $Z_H'(\omega^i)=N(\omega^i)^{N-1}$ and weights $w_i = 1/Z_H'(\omega^i)$, so
\[
f(\zeta) \;=\; \frac{\sum_{i=0}^{N-1} \dfrac{w_i\, f(\omega^i)}{\zeta-\omega^i}}
                 {\sum_{i=0}^{N-1} \dfrac{w_i}{\zeta-\omega^i}},
\]
which is computable by a single pass over $i=0,\ldots,N{-}1$. Coefficient-basis provers use a Horner-style pass over streamed coefficients. In both cases, no full polynomial materialization is required, preserving the space bounds.

\subsection{The Tree Evaluation Problem}

The space-efficient prover relies on an algorithm for the Tree Evaluation problem.

\begin{definition}[Tree Evaluation Problem]
An instance of the \texttt{Tree Evaluation} problem is a rooted tree $\mathcal{T}$ where:
\begin{itemize}
    \item Each leaf node is labeled with a value $x \in \{0,1\}^{\le b_{\text{val}}}$ (a bitstring of length at most $b_{\text{val}}$).
    \item Each internal node $v$ has an ordered list of $d_v \le \Delta$ children and is labeled with an efficiently computable function
    \[
      f_v:\ \big(\{0,1\}^{\le b_{\text{val}}}\big)^{d_v}\ \to\ \{0,1\}^{\le b_{\text{val}}}.
    \]
    \item The value of an internal node $v$ is defined recursively as $f_v$ applied to the values of its children.
\end{itemize}
Let $|\mathcal{T}|$ denote the number of nodes and $h$ the height. The goal is to compute the value of the root node.
\end{definition}

The naive depth-first traversal algorithm for this problem requires $O(h \cdot b_{\text{val}})$ space. However, a recent breakthrough by Cook and Mertz provides a substantially more space-efficient solution.

\begin{theorem}[Cook--Mertz Algorithm \cite{cookmertz2024}]
\label{thm:cook-mertz}
Let $\mathcal{T}$ be a rooted tree with $|\mathcal{T}|=n$ nodes. The \texttt{Tree Evaluation} problem on $\mathcal{T}$ can be solved using
\[
O\!\big(\log n \cdot \log\log n\big)
\]
workspace for traversal (RAM model with word size $\Theta(\log n)$). In our setting, where node values have bit-length $b_{\text{val}}$ and each node has degree at most $\Delta$, buffering children's values adds $O(\Delta\cdot b_{\text{val}})$, so the total space is
\[
O\!\big(\Delta\cdot b_{\text{val}} \;+\; \log |\mathcal{T}|\cdot \log\log |\mathcal{T}|\big).
\]
\end{theorem}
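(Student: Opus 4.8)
The statement is, in essence, the Cook--Mertz Tree Evaluation bound~\cite{cookmertz2024} specialized to our parameters, plus a small observation about buffering a node's inputs, so the plan is: (i) recall the general bound; (ii) instantiate the tree height, fan-in, and value length; and (iii) account separately for holding a node's children's values while its function is applied. I would first recall the Cook--Mertz guarantee in general form: a rooted tree of height $h$ in which every internal node has fan-in at most $d$ and every node carries a value of at most $b$ bits can be evaluated in workspace $O\big(d\cdot b + h\cdot\log(d\cdot b)\big)$ on a RAM with word size $\Theta(\log n)$ ($n$ the number of nodes), \emph{provided} each node function $f_v$ is itself computable within this budget given streaming access to the children's values. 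The engine behind the bound is a memory-reuse (``catalytic''-flavored) recursion: instead of the naive depth-first traversal, which stacks a fresh copy of each ancestor's partial state at cost $\Theta(h\cdot b)$, Cook--Mertz arithmetizes the node functions over a small field extension and evaluates each $f_v$ by low-degree interpolation, so that the recursive calls for different children and different levels share one common bank of registers; the only per-level residue is the $O(\log(d\cdot b))$ bookkeeping for field elements and interpolation points, which accumulates additively over the $h$ levels rather than multiplicatively.

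Next I would instantiate for our trees: $|\mathcal{T}|=n$, height $h=O(\log n)$ (the computation tree from the decomposition of Section~4 is balanced, or rebalanceable at constant overhead), fan-in $d\le\Delta$, and each transmitted value of length $b\le b_{\text{val}}$ bits. Substituting gives workspace $O\big(\Delta\cdot b_{\text{val}} + \log n\cdot\log(\Delta\cdot b_{\text{val}})\big)$. In the relevant regime $\Delta=O(1)$ and $b_{\text{val}}=O(\log n)$ (word-sized block commitments and accumulators), the second term is $O(\log n\cdot\log\log n)$ --- the ``traversal'' cost quoted --- while the first term $O(\Delta\cdot b_{\text{val}})$ is the cost of holding the at most $\Delta$ children's values of the single node currently being processed so $f_v$ can read them. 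Crucially, by the Cook--Mertz reuse property this input buffer is \emph{live} at one node at a time and is overwritten as the traversal advances, so it contributes additively, not multiplied by $h$; adding the two terms yields $O\big(\Delta\cdot b_{\text{val}} + \log|\mathcal{T}|\cdot\log\log|\mathcal{T}|\big)$.

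The only genuine obstacle is verifying that our instances satisfy the hypothesis of the theorem, namely that each node function $f_v$ is evaluable within (asymptotically) the same workspace, given streaming access to its inputs, so its own computation does not dominate the recursion; this is where the construction of Section~4 matters, since there each $f_v$ performs only local block arithmetic and a commitment-aggregation step with working memory $O(b_{\text{blk}})$ (charged in the space analysis of Section~5) plus $O(\Delta\cdot b_{\text{val}})$ to read its inputs. A secondary point is the height bound $h=O(\log|\mathcal{T}|)$ for the specific tree $\mathcal{T}_C$: if the natural decomposition tree is not balanced one rebalances it (or bounds $h$ directly), which costs only constant factors. With these two checks in place the bound follows directly from~\cite{cookmertz2024}.
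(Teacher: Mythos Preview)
The paper does not prove this theorem at all: it is stated as a cited result from Cook and Mertz~\cite{cookmertz2024}, with no accompanying proof or sketch. So there is nothing in the paper to compare your argument against beyond the bare citation; your plan to (i) recall the general Cook--Mertz bound $O(d\cdot b + h\log(d\cdot b))$ and (ii) instantiate the parameters is exactly the right way to unpack that citation, and your one-paragraph description of the catalytic/interpolation mechanism is accurate.

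There is, however, a real gap in your instantiation step. Your derivation of the $O(\log n\cdot\log\log n)$ traversal term rests entirely on $h=O(\log n)$, which you justify by saying the tree is ``balanced, or rebalanceable at constant overhead.'' For an \emph{arbitrary} rooted tree on $n$ nodes (which is how the theorem is phrased), this is simply false: a path has $h=n-1$, and the Cook--Mertz recursion on it costs $\Theta(n\log b)$ stack, not $O(\log n\log\log n)$. Rebalancing is not a fix, because tree evaluation is defined relative to a fixed tree with fixed node functions; restructuring the tree changes the problem instance. What actually makes $h=O(\log n)$ legitimate here is that the relevant trees are (near-)complete $\Delta$-ary trees: in the canonical Tree Evaluation problem the tree is complete, and in this paper $\mathcal{T}_C$ is the full unrolling of a constant-indegree layered DAG, so $|\mathcal{T}_C|\le \Delta^{h+1}$ and hence $\log|\mathcal{T}_C|=\Theta(h)$. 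The paper itself uses exactly this inequality (``$\log|\mathcal{T}|\le h\log\Delta=O(h)$'') in the Parameter-instantiation paragraph immediately following the theorem. You should state that structural hypothesis explicitly rather than invoke rebalancing.

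A smaller point: you also assume $b_{\text{val}}=O(\log n)$ to collapse $\log(\Delta\cdot b_{\text{val}})$ into $\log\log n$. The theorem as stated deliberately keeps the $O(\Delta\cdot b_{\text{val}})$ buffering term separate from the traversal term precisely so that no such bound on $b_{\text{val}}$ is required; in the paper's application $b_{\text{val}}=\Theta(\lambda)$, which need not be $O(\log|\mathcal{T}_C|)$.
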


\noindent\textbf{Corollary (instantiation).}
With $\Delta=O(1)$ and $b_{\text{val}}=\Theta(\lambda)$, the Cook--Mertz stack uses $O\!\big(\lambda + \log|\mathcal{T}_C|\log\log|\mathcal{T}_C|\big)$ space. Since $\log|\mathcal{T}_C|\le O(T/b_{\text{blk}})$, this stack term is $O\!\big((T/b_{\text{blk}})\cdot\log\log|\mathcal{T}_C|\big)=O(T/b_{\text{blk}})$ \emph{up to a $\log\log$ factor}; with $b_{\text{blk}}=\Theta(\sqrt{T})$, the overall space remains $O(\sqrt{T})$ \emph{up to a $\log\log$ factor}, as in Theorem~\ref{thm:space-complexity}.

\paragraph{Parameter instantiation for our setting.}
In our construction the computation tree has maximum degree $\Delta=O(k)$ (constant in typical AIRs), node value size $b_{\text{val}}=\Theta(\lambda)$ (a small tuple of commitments and $O(1)$ field elements), and height $h=O(T/b_{\text{blk}})$. Since $\log|\mathcal{T}|\le h\log \Delta=O(h)$, Theorem~\ref{thm:cook-mertz} gives stack space
\[
O\!\big(\Delta\cdot b_{\text{val}}+\log|\mathcal{T}|\cdot\log\log|\mathcal{T}|\big)
\;=\;
O\!\big(\lambda+(T/b_{\text{blk}})\cdot\log\log|\mathcal{T}|\big),
\]
i.e., $=O(\lambda+T/b_{\text{blk}})$ up to a $\log\log$ factor, which we combine with $O(b_{\text{blk}})$ local space and optimize at $b_{\text{blk}}=\Theta(\sqrt{T})$.

\bigskip

\section{From Computations to Graphs}

The first step in construction is to impose a regular, block-based structure onto an arbitrary computation. This process, adapted from the complexity-theoretic work of Hopcroft, Paul, and Valiant~\cite{HPV75} and Williams~\cite{williams2025}, allows us to reason about local data dependencies. We show how to transform any arithmetic circuit into an equivalent ``block-respecting'' form and then define its corresponding computation graph.

\subsection{Block-Respecting Decomposition}

We conceptually partition the execution trace matrix of an arithmetic circuit into discrete blocks using the AIR (Algebraic Intermediate Representation) model established in Section~2. For $n\in\mathbb{N}$, write $[n]\coloneqq\{1,\ldots,n\}$.

\begin{definition}[Block-Respecting Computation]
A computation corresponding to an arithmetic circuit $C$ of size $T$ with execution trace matrix $W \in \mathbb{F}^{T \times k}$ is $(b_{\text{blk}}, B)$-block-respecting, for parameters $b_{\text{blk}}$ (block size) and $B = \lceil T/b_{\text{blk}} \rceil$ (number of blocks), if:
\begin{enumerate}
    \item \textbf{Time Blocks:} The $T$ rows of $W$ are partitioned into $B$ consecutive time blocks $W^{(1)}, W^{(2)}, \ldots, W^{(B)}$, where each $W^{(t)} \in \mathbb{F}^{b_{\text{blk}} \times k}$ contains at most $b_{\text{blk}}$ consecutive rows of the trace matrix.
    \item \textbf{Memory Regions:} The $k$ columns of $W$ represent $k$ memory regions (registers). In standard fixed-column AIRs, $k=O(1)$; all bounds below scale with $k$ in general.
    \item \textbf{Locality (first-order transition with bounded cross-register reads):} The AIR is first-order in time: there is a fixed set of transition polynomials $P_1,\ldots,P_r$ such that each interior row pair $(i,i{+}1)$ satisfies $P_\ell(W[i,\cdot],W[i{+}1,\cdot])=0$ for all $\ell\in[r]$. Moreover, for each target register $m$, computing its next-row value uses the previous-row value of $m$ and at most $r_{\text{reg}}$ additional registers from that previous row, where $r_{\text{reg}}$ is the \emph{cross-register read-degree}. In fixed-column AIRs, $r_{\text{reg}}=O(1)$; in general, bounds may scale with $k$.
\end{enumerate}
\end{definition}

Crucially, each block $t$ needs only \textbf{boundary values from block $t{-}1$} (plus fixed selectors), making the dependency structure layered and enabling space-efficient evaluation.

For arithmetic circuits, this decomposition is achieved by partitioning the topologically ordered execution trace matrix into consecutive row blocks.

\begin{lemma}[Block-Respecting Transformation]
\label{lem:block-respecting}
Any arithmetic circuit $C$ of size $T$ with execution trace matrix $W \in \mathbb{F}^{T \times k}$ naturally yields a $(b_{\text{blk}}, \lceil T/b_{\text{blk}} \rceil)$-block-respecting computation with no asymptotic overhead in time or space, provided the cross-register read-degree $r_{\text{reg}}$ remains bounded.
\end{lemma}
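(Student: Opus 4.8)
The plan is to show that the "block-respecting" property is essentially automatic once we partition the topologically ordered trace into consecutive row blocks, so the only content is (a) verifying that the three defining conditions hold after the partition, and (b) checking that forming the partition costs nothing asymptotically. First I would fix the topological order of $C$ that underlies the execution trace $W\in\mathbb{F}^{T\times k}$ from the Execution Trace definition: row $i$ records the register contents after the $i$-th gate in topological order. I then set $B=\lceil T/b_{\text{blk}}\rceil$ and define $W^{(t)}$ to be the submatrix of rows $(t-1)b_{\text{blk}},\ldots,\min(tb_{\text{blk}},T)-1$; this is condition (1) by construction, with the last block possibly short. Condition (2) is inherited verbatim: the $k$ columns of $W$ are the $k$ registers, and $k=O(1)$ in the fixed-column AIRs we target (the general $k$-dependence is carried through the bounds, as the definition already flags).

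The substantive point is condition (3). Here I would invoke the fact that the AIR attached to an arithmetic circuit is \emph{first-order in time}: each interior row pair $(i,i{+}1)$ is governed by the fixed transition polynomials $P_1,\ldots,P_r$ coming from the single gate applied at step $i{+}1$, so $P_\ell(W[i,\cdot],W[i{+}1,\cdot])=0$ for all $\ell\in[r]$. Since a gate has indegree at most $2$ (Arithmetic Circuit definition), computing the next-row value of the target register $m$ uses $W[i,m]$ together with at most the two registers feeding that gate, plus fixed selector columns that encode which gate fires; hence the cross-register read-degree satisfies $r_{\text{reg}}=O(1)$, and in general $r_{\text{reg}}\le$ (max fan-in into a register) which is the quantity the hypothesis asks to be bounded. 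The key consequence to spell out is the boundary-interface claim emphasized after the definition: because transitions only couple row $i$ to row $i{+}1$, every constraint internal to block $t$ refers only to rows inside $W^{(t)}$, and the only constraints crossing the block boundary are those on the pair $(tb_{\text{blk}}-1,\ tb_{\text{blk}})$, i.e.\ the last row of $W^{(t)}$ and the first row of $W^{(t+1)}$. Thus block $t$ is determined by $W^{(t-1)}$'s final row (its "boundary values") plus the fixed selectors, which is exactly the layered dependency structure the lemma is meant to expose.

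Finally I would address the "no asymptotic overhead" clause. The partition is purely a re-indexing: it reads the same $T\times k$ entries in the same topological order and emits them in $B$ contiguous groups, so it adds $O(1)$ bookkeeping (a block counter and an offset) and no extra passes — time stays $O(T)$-equivalent and space stays within $O(b_{\text{blk}})$ for the current block plus $O(k)$ for the carried boundary row. I would note that the selector/fixed columns are public and can be regenerated per block on the fly, so they do not inflate the stored state. I do not expect a genuine obstacle here; the only delicate wording is making precise that "bounded $r_{\text{reg}}$" is what licenses treating a single boundary row — rather than a growing window — as the entire inter-block interface, and that for circuits this boundedness is free because gate fan-in is $\le 2$. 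That observation is the crux, and everything else is routine verification against the definitions of Section~2.
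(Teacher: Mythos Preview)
Your proposal is correct and follows essentially the same three-step structure as the paper's proof: partition the rows into $B=\lceil T/b_{\text{blk}}\rceil$ consecutive blocks, verify the block-respecting conditions via the first-order AIR transition with bounded cross-register reads (so inter-block dependencies reduce to a single boundary row), and observe that row-grouping adds only $O(1)$ bookkeeping. Your write-up is more detailed---in particular you explicitly derive $r_{\text{reg}}=O(1)$ from the fan-in $\le 2$ of arithmetic-circuit gates, which the paper simply asserts for fixed-column AIRs---but the argument is the same.
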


\begin{proof}
Consider the execution trace matrix $W \in \mathbb{F}^{T \times k}$ for circuit $C$:
\begin{enumerate}
    \item \textbf{Matrix Partitioning:} Partition the $T$ rows into $B = \lceil T/b_{\text{blk}} \rceil$ consecutive blocks of size at most $b_{\text{blk}}$.
    \item \textbf{Block-Respecting Property:} By the AIR structure (Section~2), the transition relation is first-order in time and, for each target register, depends on its own previous-row value and at most $r_{\text{reg}}$ additional previous-row registers (with $r_{\text{reg}}=O(1)$ in fixed-column AIRs). Thus, within each block, dependencies are confined to the block together with boundary values from the immediately preceding block.
    \item \textbf{Evaluation:} The original evaluator processes the trace row-by-row; grouping rows into blocks preserves semantics and does not change time or space complexity up to constant factors.
\end{enumerate}
\end{proof}

\subsection{The Computation Graph}

From a block-respecting computation, we construct a graph that captures the flow of information between blocks using the AIR trace structure.

\begin{definition}[Computation Graph]
Let $C$ be a $(b_{\text{blk}}, B)$-block-respecting computation with execution trace matrix $W \in \mathbb{F}^{T \times k}$. The computation graph $G_C = (V, E)$ is a directed acyclic graph defined as follows:
\begin{itemize}
    \item \textbf{Nodes $V$:} The set of nodes $V$ contains two types of vertices:
        \begin{itemize}
            \item \textbf{Computation Nodes:} A node $(m, t)$ for each register $m \in [k]$ and time block $t \in [B]$. This node represents the state of register $m$ within the $t$-th block of rows.
            \item \textbf{Source Nodes:} A node $(m, 0)$ for each register $m \in [k]$, representing the initial contents of register $m$ (public inputs or private witness values).
        \end{itemize}
    \item \textbf{Edges $E$:} An edge $(u, v) \in E$ exists if the computation at node $v$ directly requires data from node $u$. Formally, for a computation node $v = (m, t)$ with $t > 0$:
    \begin{enumerate}
        \item There is an edge from $(m, t-1)$ to $(m, t)$, representing the dependency on the same register from the previous time block.
        \item For each register $m' \neq m$ whose boundary value at the end of block $t{-}1$ is read when producing block $t$ for register $m$, there is an edge from $(m', t-1)$ to $(m, t)$.
        \item For $t = 1$, there are edges from the appropriate source nodes $(m', 0)$ to $(m, 1)$ as determined by the circuit's input dependencies.
    \end{enumerate}
\end{itemize}
\end{definition}

\noindent\emph{Implementation note.} We do not materialize $G_C$; given the AIR template and a streaming description of $C$, the predecessors of $(m,t)$ can be enumerated in $O(\log T)$ space by scanning local wiring.

\begin{lemma}[Bounded Indegree]
\label{lem:bounded-indegree}
Each computation node $(m,t)$ in $G_C$ has indegree at most $\min\{k,\,1{+}r_{\text{reg}}\}$, where $r_{\text{reg}}$ is the cross-register read-degree from the definition above. When $k$ or $r_{\text{reg}}$ is not constant, the overall complexity bounds scale accordingly.
\end{lemma}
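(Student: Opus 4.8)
The plan is to unpack the definition of the edge set $E$ of $G_C$ at a fixed computation node $v=(m,t)$, enumerate its in-edges directly from that definition, and then bound their number in two independent ways whose minimum gives the stated bound.

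First I would treat the generic case $t>1$. By the definition of $E$, every in-edge of $(m,t)$ originates at a node of the form $(m',t-1)$ with $m'\in[k]$: one edge is the forced same-register dependency $(m,t-1)\to(m,t)$, and the remaining in-edges are exactly the cross-register boundary reads $(m',t-1)\to(m,t)$ with $m'\neq m$. Two observations close this case. First, since all predecessors lie in the single layer $\{(m',t-1):m'\in[k]\}$ and $G_C$ has no multi-edges, there are at most $k$ of them. Second, by the Locality clause of the block-respecting definition, producing block $t$ for register $m$ uses the previous-row value of $m$ together with at most $r_{\text{reg}}$ additional previous-row registers; hence the number of cross-register in-edges is at most $r_{\text{reg}}$, and the total indegree is at most $1+r_{\text{reg}}$. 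Combining the two bounds gives indegree $\le\min\{k,\,1+r_{\text{reg}}\}$.

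Next I would dispatch the boundary case $t=1$, where the predecessors are source nodes $(m',0)$. The argument is the same once we note that the circuit's input-dependency pattern for register $m$ at block $1$ is governed by the identical AIR template: register $m$'s own initial value plus at most $r_{\text{reg}}$ other initial register values. So both bounds apply verbatim; and source nodes $(m',0)$ themselves have indegree $0$, which trivially satisfies the claim.

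I do not anticipate a genuine obstacle. The only point needing a little care is to confirm that the two edge classes in the definition — the forced in-time edge $(m,t-1)\to(m,t)$ and the cross-register reads — are exhaustive (the definition introduces no further in-edges) and non-overlapping (the self-register edge is not among the $\le r_{\text{reg}}$ cross-register ones), which is precisely why the bound is $1+r_{\text{reg}}$ and not $2r_{\text{reg}}$ or $r_{\text{reg}}$. If one instead works with a variant AIR in which the next value of register $m$ need not depend on its own previous value, the same reasoning yields the same (still valid) bound, since that only removes edges and never adds them; likewise, when $k$ or $r_{\text{reg}}$ is superconstant the bound $\min\{k,\,1+r_{\text{reg}}\}$ is unchanged and the downstream complexity estimates simply scale with it.
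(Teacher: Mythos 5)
Your proposal is correct and follows essentially the same argument as the paper: bound the indegree by $1+r_{\text{reg}}$ via the Locality clause (one same-register in-edge plus at most $r_{\text{reg}}$ cross-register reads), bound it by $k$ since all predecessors lie in the single layer $\{(m',t-1):m'\in[k]\}$, and take the minimum. Your explicit separation of the $t=1$ boundary case and the observation that source nodes have indegree $0$ are slightly more thorough than the paper's one-line treatment but do not change the structure of the argument.
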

\begin{proof}
By first-order locality with bounded cross-register reads, producing block $t$ for register $m$ depends on the boundary value of register $m$ from block $t{-}1$ and on at most $r_{\text{reg}}$ other registers' boundary values from block $t{-}1$. Thus the number of incoming edges to $(m,t)$ is at most $1{+}r_{\text{reg}}$, and certainly no more than $k$.
\end{proof}

\begin{lemma}[Layered Schedule]
\label{lem:layered-schedule}
Each node $(m,t)$ has at most $O(k)$ consumers in layer $t{+}1$. Thus a layer-by-layer schedule computes each node exactly once and can discard it after its consumers in the next layer have processed it, avoiding recomputation blowup.
\end{lemma}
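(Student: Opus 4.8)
The plan is to establish the two assertions of the lemma separately: first the $O(k)$ bound on the number of consumers of a node, then the correctness and the no-recomputation property of the layer-by-layer schedule, and finally to account for the space and flag the one genuine hypothesis the argument needs.

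\textbf{Bounding consumers.} Fix a computation node $(m,t)$. By the edge definition of the computation graph $G_C$, an edge $(m,t)\to(m'',t{+}1)$ exists precisely when, in producing block $t{+}1$ for register $m''$, the boundary value of register $m$ at the end of block $t$ is read (this subsumes the same-register edge $m''=m$). Since the target register $m''$ ranges over $[k]$, the out-neighborhood of $(m,t)$ has cardinality at most $k$. Moreover, by the first-order locality of the block-respecting decomposition (property~3 of the block-respecting definition, cf.\ the Bounded Indegree lemma), every edge leaving a layer-$t$ node lands in layer $t{+}1$; hence $(m,t)$ has no consumers outside layer $t{+}1$ and its total consumer count is at most $k=O(k)$.

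\textbf{The schedule and its correctness.} Next I would describe the schedule explicitly: maintain two frontier buffers, each holding the $k$ node-values (of bit-length $b_{\text{val}}$) of a single layer. Initialize the ``previous'' buffer with layer $0$, the source nodes, i.e.\ the initial register contents. For $t=1,\dots,B$ and each register $m\in[k]$: enumerate the $\le 1{+}r_{\text{reg}}$ predecessors of $(m,t)$ from the AIR template in $O(\log T)$ space (per the implementation note), read their values from the previous buffer, evaluate the local block computation producing $(m,t)$ using $O(b_{\text{blk}})$ scratch, and write the result into the ``current'' buffer; once the layer is complete, swap the two buffers and reuse the stale one. Each node of $G_C$ is evaluated exactly once, since the outer loop visits each layer once and the inner loop each node of that layer once. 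Because every incoming edge to a layer-$(t{+}1)$ node originates in layer $t$ (first-order locality), once the current buffer holds all of layer $t{+}1$ the values of layer $t$ are never referenced again and may be overwritten — this is precisely the claimed safe-discard property. It is also exactly what separates this schedule from the naive unrolling into a tree, in which a shared node with $c$ parents is recomputed $c$ times and the cost blows up multiplicatively with depth; here, discarding a fully-consumed layer costs nothing and no re-evaluation ever occurs.

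\textbf{Space accounting and the main subtlety.} Finally I would observe that at most two layers are resident at any instant, contributing $O(k\cdot b_{\text{val}})$, plus $O(b_{\text{blk}})$ for the active block computation and $O(\log T)$ for predecessor enumeration, for a total of $O(k\,b_{\text{val}}+b_{\text{blk}}+\log T)$ with no dependence on $B$ and hence no linear-in-$T$ term. The only real obstacle is justifying the discard, and it rests entirely on the Markovian structure (first-order in time, bounded cross-register read-degree) imposed by the block-respecting transformation: were the AIR of higher time-order or permitted unbounded non-local reads, a layer-$(t{+}1)$ node could depend on layer $t{-}1$ or earlier, the frontier would have to retain more layers, and bounded-space evaluation without recomputation would fail. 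Accordingly the lemma should carry that hypothesis explicitly, as it already does in the surrounding definitions, and the proof then goes through routinely.
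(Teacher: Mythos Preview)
Your proof is correct and matches the paper's reasoning. In fact the paper states this lemma without an explicit proof, treating it as immediate from the edge definition and first-order locality; the closest the paper comes to your argument is the ``Memoization schedule'' paragraph inside the Prover Time Complexity proof, which gives essentially the same layer-by-layer justification (process in increasing $t$, each node consumed only by $O(k)$ nodes in layer $t{+}1$, discard after consumption) but more tersely and in a later section. Your version is more careful and complete than what the paper provides.
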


We define a block evaluator that emits exactly the row-local fields needed for global checks and the target register's values.

\begin{algorithm}[H]
\caption{$\mathsf{EvalBlock}(m,t,\text{inputs})$}
\label{alg:evalblock-s3}
\begin{algorithmic}[1]
\REQUIRE Boundary vectors from predecessors at block $t{-}1$ (including $(m,t{-}1)$); block index $t$; target register $m$
\ENSURE $\text{regM\_vals}\in\mathbb{F}^{|H_t|}$, $\text{local}$ fields for global checks at each row in $H_t$, and $\mathsf{boundary}_{\text{out}}\in\mathbb{F}^k$
\STATE Assemble prior-row inputs at the block boundary
\FOR{each row $i\in H_t$ in order}
    \STATE Advance the AIR state via the fixed polynomials $P_1,\ldots,P_r$
    \STATE Append $\text{state}[m]$ to $\text{regM\_vals}$
    \STATE Emit the fixed tuple of row-local fields (e.g., $\{w_c(i)\}_{c\in[k]}$ or a fixed compressed form) into $\text{local}$
\ENDFOR
\RETURN $(\text{regM\_vals},\ \text{local},\ \mathsf{boundary}_{\text{out}})$
\end{algorithmic}
\end{algorithm}

\begin{lemma}[Graph Simulability, block evaluation]
\label{lem:graph-simulability-upd}
Given the contents of all predecessor nodes of a computation node $(m,t)$, the content of $(m,t)$ can be computed in time $O\!\big(k\, b_{\text{blk}}\big)$ and space $O\!\big(b_{\text{blk}} + \deg^{-}(m,t)\big)$ using Algorithm~\ref{alg:evalblock-s3}. In particular, if the AIR has constant register count $k=O(1)$ and read-degree $r_{\text{reg}}=O(1)$ (hence $\deg^{-}(m,t)=O(1)$), this is $O(b_{\text{blk}})$ time and $O(b_{\text{blk}})$ space.
\end{lemma}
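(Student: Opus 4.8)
The plan is to prove Lemma~\ref{lem:graph-simulability-upd} by a direct analysis of Algorithm~\ref{alg:evalblock-s3} in three stages — correctness, running time, and working space — using only the first-order locality clause of the block-respecting definition and the indegree bound of Lemma~\ref{lem:bounded-indegree}. The one delicate point, which I flag in advance, is the space accounting: the bound $O(b_{\text{blk}}+\deg^{-}(m,t))$ holds only because the row-local output is \emph{streamed} rather than buffered, and this must be made explicit.

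First I would establish correctness. The input to $\mathsf{EvalBlock}(m,t,\cdot)$ is, by construction of $G_C$, the boundary data of exactly the predecessor nodes of $(m,t)$ at block $t{-}1$: register $m$ itself together with the at most $r_{\text{reg}}$ other registers whose end-of-block-$(t{-}1)$ values are read when producing block $t$ for register $m$ (or the corresponding source nodes $(m',0)$ when $t=1$). By the locality clause, this is precisely the data the deterministic forward transition needs at the block boundary, so line~1 assembles a complete prior-row input. I would then argue by induction on the row index $i\in H_t$ that line~3 — applying the same fixed forward update (encoded by $P_1,\dots,P_r$) that the honest evaluator uses in the circuit's topological order — reproduces $W[i,\cdot]$; consequently line~4 appends the correct value $W[i,m]$, line~5 emits the correct row-local tuple, and the returned $\mathsf{boundary}_{\text{out}}$ equals the last row of $W$ restricted to $H_t$, which is exactly the content that outgoing edges of $(m,t)$ carry to layer $t{+}1$. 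The edge cases — the first block (source-node inputs) and the terminal row (terminal constraints) — are absorbed by substituting the boundary/terminal conditions into the same recurrence.

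Second, the time bound follows by counting: the loop runs $|H_t|\le b_{\text{blk}}$ iterations, each evaluating the fixed set of transition updates (touching $O(k)$ register entries) and emitting an $O(k)$-size row-local tuple, i.e.\ $O(k)$ field operations per row, plus the $O(\deg^{-}(m,t))=O(k)$ cost of assembling the boundary input, for $O(k\,b_{\text{blk}})$ total. Third, for the space bound I would enumerate the working memory: the register-value accumulator $\mathrm{regM\_vals}$, of length $|H_t|\le b_{\text{blk}}$; the current AIR state, a single $k$-vector; the assembled boundary inputs, $O(\deg^{-}(m,t))$ field elements, which Lemma~\ref{lem:bounded-indegree} bounds by $\min\{k,1{+}r_{\text{reg}}\}$; and $O(1)$ scratch per transition evaluation. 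The key observation is that line~5 writes each row's local tuple to the consumer as it is produced and is not retained, so the row-local output contributes only the $O(k)$ space needed to hold one tuple, not $O(k\,b_{\text{blk}})$; under the standing convention $b_{\text{blk}}\ge k$ (automatic when $k=O(1)$) these $O(k)$ terms are absorbed into $O(b_{\text{blk}})$, giving total space $O(b_{\text{blk}}+\deg^{-}(m,t))$. The stated specialization is then immediate: with $k=O(1)$ and $r_{\text{reg}}=O(1)$, Lemma~\ref{lem:bounded-indegree} gives $\deg^{-}(m,t)\le 1{+}r_{\text{reg}}=O(1)$, collapsing both bounds to $O(b_{\text{blk}})$.

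I expect the main obstacle to be this last streaming point rather than the correctness induction: one must be careful to state that $\mathrm{regM\_vals}$ and, crucially, the $\mathrm{local}$ fields are emitted in a streaming fashion and that each row's $\mathrm{local}$ tuple has fixed size independent of $b_{\text{blk}}$, since a reading of Algorithm~\ref{alg:evalblock-s3} that buffers all emitted rows would inflate the space to $O(k\,b_{\text{blk}})$ and break the eventual $O(\sqrt T)$ target once composed with the tree traversal in the later sections.
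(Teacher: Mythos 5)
Your proof is correct and follows essentially the same direct analysis as the paper's brief proof: assemble inputs from predecessor boundaries, run the row-by-row first-order transition loop, and tally the live working set to get the time and space bounds. You are somewhat more careful than the paper about the space accounting for $\text{local}$ — the paper simply asserts that ``storing the predecessor boundaries and the outputs requires $O(\deg^{-}(m,t)+b_{\text{blk}})$ field elements,'' leaving implicit that the per-row local tuple must be either $O(1)$ in size (the ``fixed compressed form'') or streamed to the consumer rather than buffered, which you correctly flag as the load-bearing point for the general-$k$ space bound.
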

\begin{proof}
Algorithm~\ref{alg:evalblock-s3} takes as input only the predecessor boundary vectors, whose total size is $O(\deg^{-}(m,t))$. Each iteration updates the local state via a constant-arity transition map $\mathsf{F}$ (dependent only on $P_1,\ldots,P_r$), writes one entry of $\text{regM\_vals}$, and emits the fixed tuple of row-local fields, yielding $b_{\text{blk}}$ iterations and the stated time bound. Storing the predecessor boundaries and the outputs requires $O(\deg^{-}(m,t)+b_{\text{blk}})$ field elements.
\end{proof}

\bigskip

\section{The Tree Evaluation Equivalence}

This section presents this work's main conceptual contribution. We show that the process of generating a zero-knowledge proof can be reframed as an evaluation on an implicit tree derived from the computation graph. This reframing replaces the monolithic generation of the trace with a recursive, space-efficient construction of cryptographic commitments.

\subsection{From Computation Graphs to Implicit Trees}

The computation graph $G_C$ is a directed acyclic graph (DAG). We can ``unroll'' this DAG into an equivalent tree structure, where shared nodes in the DAG are duplicated.

\begin{definition}[Computation Tree]
Given a computation graph $G_C$ and a designated final output node $v_{\mathrm{root}} \in G_C$ (e.g., $(1,B)$), the corresponding \emph{computation tree} $\mathcal{T}_C$ is an implicitly defined tree where:
\begin{itemize}
    \item The root of $\mathcal{T}_C$ corresponds to $v_{\mathrm{root}}$ in $G_C$.
    \item For any node $v$ in $\mathcal{T}_C$ corresponding to a node $u \in G_C$, its children in $\mathcal{T}_C$ correspond to the predecessors of $u$ in $G_C$.
    \item The leaves of $\mathcal{T}_C$ correspond to the source nodes of $G_C$.
\end{itemize}
\end{definition}
We never materialize $\mathcal{T}_C$ or duplicate DAG nodes: we may traverse as a tree with memoization or process the layered DAG; both realize the same stack bound $O(\log|\mathcal{T}_C|\log\log|\mathcal{T}_C|)$ up to an additive $O(\lambda)$ buffer while avoiding recomputation blowup. The height is $h = B = O(T/b_{\text{blk}})$ and the branching factor is $\Delta\le \min\{k,\,1{+}r_{\text{reg}}\}=O(k)$ (see §3).

\subsection{Tree Functions as Cryptographic Generators}
\label{subsec:tree-funcs}

We lift the computation from field elements to \emph{cryptographic state}. Each node of the computation tree returns (i) a one-hot vector of commitments (nonzero only at its own register coordinate) and (ii) a compact \emph{auxiliary state} $\mathsf{aux}$ carrying boundary data and running accumulators for global checks (permutation, lookups). The node state consists of $O(\lambda)$ group/field elements.

\begin{definition}[Auxiliary state]
For a node $v=(m,t)$ (register $m$, time-block $t$), let
\[
\mathsf{aux}_v \ =\ \big(\ \mathsf{boundary}_v\in\mathbb{F}^k,\ \ Z_v\in\mathbb{F},\ \ Z^{\mathrm{L}}_v\in\mathbb{F}^{\le c_{\mathrm{L}}}\ \big),
\]
where $\mathsf{boundary}_v$ stores the end-of-block values of all $k$ registers, $Z_v$ is the permutation product accumulator, and $Z^{\mathrm{L}}_v$ is an optional tuple of lookup accumulators as required by the chosen lookup scheme; $c_{\mathrm{L}}=O(1)$.
\end{definition}

\paragraph{Block-aligned domain slices.}
Fix a partition $H=\bigsqcup_{t=1}^B H_t$ into contiguous subdomains aligned with time blocks (padding the final slice if $N\!>\!T$). For register $m$ and block $t$, let $G_{t,m}$ be the degree-$<N$ polynomial that equals the $m$-th register's values on $H_t$ and $0$ elsewhere.

\begin{algorithm}[H]
\caption{$\mathsf{EvalBlock}(m,t,\text{inputs})$}
\label{alg:evalblock}
\begin{algorithmic}[1]
\REQUIRE Boundary vectors from predecessors (including $(m,t{-}1)$), block index $t$, target register $m$
\ENSURE $\text{local}$ (rowwise values needed for checks), $\mathsf{boundary}_{\text{out}}$, and the length-$|H_t|$ vector for register $m$
\STATE Assemble prior-row inputs at the block boundary
\FOR{each row $i\in H_t$ in order}
    \STATE Advance the AIR state via the fixed polynomials $P_1,\ldots,P_r$
    \STATE Emit the row-local fields required by global checks into $\text{local}$; emit the $m$-coordinate into the register-$m$ output
\ENDFOR
\RETURN $(\text{local},\ \mathsf{boundary}_{\text{out}},\ \text{regM\_vals})$
\end{algorithmic}
\end{algorithm}

\paragraph{Permutation/lookup updates are multiplicative and block-factored.}
We formalize the streaming of global accumulators. For concreteness, consider a Plonk-style permutation with registers indexed by $c\in[k]$, a column \emph{id} mapping $\mathrm{id}_c(i)$ and a permutation $\sigma_c(i)$. Let $w_c(i)$ be the value in register $c$ at row $i$.

\begin{lemma}[Blockwise permutation update]
\label{lem:block-perm}
Fix $(\beta,\gamma)\in\mathbb{F}^2$ and a block $H_t=\{i_t,\ldots,i_t+|H_t|-1\}$. The permutation accumulator $Z$ satisfies
\[
Z(i{+}1)\ =\ Z(i)\cdot
\frac{\prod_{c=1}^{k}\big(w_c(i)+\beta\cdot \mathrm{id}_c(i)+\gamma\big)}
{\prod_{c=1}^{k}\big(w_c(i)+\beta\cdot \sigma_c(i)+\gamma\big)}\qquad(i\in H_t),
\]
and thus admits a multiplicative block factor
\[
Z_{\mathrm{end}}^{(t)}\ =\ Z_{\mathrm{start}}^{(t)}\cdot
F_t(\beta,\gamma),\qquad
F_t(\beta,\gamma)=\prod_{i\in H_t}
\frac{\prod_{c}\big(w_c(i)+\beta\,\mathrm{id}_c(i)+\gamma\big)}
{\prod_{c}\big(w_c(i)+\beta\,\sigma_c(i)+\gamma\big)}.
\]
Consequently, blocks must be applied in \textbf{increasing time order} to maintain causality; the same final value results from this unique valid schedule.
\end{lemma}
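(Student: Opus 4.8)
The plan is to obtain the factorization as a pure telescoping identity for the first-order multiplicative recurrence that \emph{defines} the Plonk grand-product accumulator, and then to read off the causality claim from the observation that the block multiplier is purely local while its seed is not. I would fix notation first: write $a_c(i)=w_c(i)+\beta\,\mathrm{id}_c(i)+\gamma$ and $b_c(i)=w_c(i)+\beta\,\sigma_c(i)+\gamma$, set $a(i)=\prod_{c=1}^k a_c(i)$ and $b(i)=\prod_{c=1}^k b_c(i)$, and recall the standard convention that $Z$ is pinned by $Z(\omega^0)=1$ together with the interior recurrence $Z(i{+}1)=Z(i)\cdot a(i)/b(i)$. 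In one line I would record that for uniformly random $(\beta,\gamma)$ each $b_c(i)$ is nonzero except with probability $\le kN/|\mathbb{F}|=1/\mathrm{poly}(T)$ by Schwartz--Zippel, so the quotients are well defined on the good event that the permutation argument's soundness analysis already conditions on; this is a consistency remark, not an extra hypothesis.

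Next I would telescope over a single block. Iterating $Z(i{+}1)=Z(i)\cdot a(i)/b(i)$ across $i\in H_t=\{i_t,\dots,i_t+|H_t|-1\}$ yields
\[
Z\big(i_t+|H_t|\big)\;=\;Z(i_t)\cdot\prod_{i\in H_t}\frac{a(i)}{b(i)},
\]
which is exactly $Z_{\mathrm{end}}^{(t)}=Z_{\mathrm{start}}^{(t)}\cdot F_t(\beta,\gamma)$ once we set $Z_{\mathrm{start}}^{(t)}:=Z(i_t)$, $Z_{\mathrm{end}}^{(t)}:=Z(i_t+|H_t|)$, and identify the product with $F_t(\beta,\gamma)$ from the statement. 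Because the slices $H=\bigsqcup_t H_t$ are contiguous and time-ordered, the first row of $H_{t+1}$ is precisely the row $i_t+|H_t|$, hence $Z_{\mathrm{start}}^{(t+1)}=Z_{\mathrm{end}}^{(t)}$; chaining from $Z_{\mathrm{start}}^{(1)}=1$ then expresses $Z$ at every block seam as $\prod_{s\le t}F_s(\beta,\gamma)$, and the product over all $B$ blocks reproduces the global grand product $\prod_i a(i)/b(i)$. This shows the block-factored form is faithful rather than merely an inequality.

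Then I would handle causality and uniqueness. The multiplier $F_t$ depends only on the local data of block $t$ (the values $w_c(i),\mathrm{id}_c(i),\sigma_c(i)$ for $i\in H_t$), but its input $Z_{\mathrm{start}}^{(t)}$ equals the output of block $t{-}1$; hence the only internally consistent evaluation order is $t=1,2,\dots,B$, and along that order every boundary value is forced by the pin $Z_{\mathrm{start}}^{(1)}=1$ and the sequence $(F_t)_t$, so there is no scheduling freedom and the final value is unique. I would close this part with the implementation remark that $F_t$ is computed by one streaming pass over the rows of $H_t$ maintaining a single running field element (or two, accumulating numerator and denominator separately and dividing once), which matches the $O(1)$ field-element budget of the auxiliary state produced by $\mathsf{EvalBlock}$ (Algorithm~\ref{alg:evalblock}).

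I do not expect a substantive obstacle here—the core is a one-line telescoping—so the only real care is bookkeeping at block seams, i.e., making $Z_{\mathrm{end}}^{(t)}$ and $Z_{\mathrm{start}}^{(t+1)}$ literally the same row so that no off-by-one creeps in, and phrasing the denominator well-definedness as a consistency note aligned with the ambient soundness argument rather than as a new assumption. If lookup accumulators $Z^{\mathrm{L}}_v$ are also present, the same argument applies coordinatewise: multiplicative lookup accumulators telescope identically, and log-derivative (additive) variants telescope to $\sum_{s\le t}$ of per-block contributions in place of a product, with an unchanged causality conclusion.
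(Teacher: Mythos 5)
The paper states Lemma~\ref{lem:block-perm} without an explicit proof, treating the telescoping factorization of Plonk's grand-product recurrence as a standard fact. Your telescoping argument is exactly the canonical justification—iterating $Z(i{+}1)=Z(i)\,a(i)/b(i)$ over $H_t$ and chaining block seams via $Z_{\mathrm{start}}^{(t+1)}=Z_{\mathrm{end}}^{(t)}$—and the Schwartz--Zippel denominator remark and lookup-accumulator extension are accurate additions that do not change the substance.
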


\paragraph{Committed accumulator column (permutation).}
If the protocol commits to the permutation column $Z$, then during $\mathsf{EvalBlock}$ we also stream the per-row values $Z(i)$ within $H_t$ using the recurrence above, producing a vector $\text{z\_vals}\in\mathbb{F}^{|H_t|}$. Let $G_{t,Z}$ be the degree-$<N$ polynomial equal to $Z(i)$ on $H_t$ and $0$ elsewhere. We commit to $G_{t,Z}$ and aggregate across $t$ using Lemma~\ref{lem:linear-agg}, yielding the baseline commitment to $Z$ (distribution-identical for linear PCSs).

\begin{lemma}[Blockwise lookup update]
\label{lem:block-lookup}
For a standard lookup accumulator $Z_{\mathrm{L}}$ (e.g., Plookup/Halo2-style), there exists a per-row multiplicand $\phi^{\mathrm{L}}(i)$ determined by the scheme's lookup relation such that
\[
Z_{\mathrm{L}}(i{+}1)\ =\ Z_{\mathrm{L}}(i)\cdot \phi^{\mathrm{L}}(i)\qquad(i\in H_t),
\]
whence $Z_{\mathrm{L,end}}^{(t)}=Z_{\mathrm{L,start}}^{(t)}\cdot \prod_{i\in H_t}\phi^{\mathrm{L}}(i)$. Thus lookup accumulators also stream with multiplicative block factors.
\end{lemma}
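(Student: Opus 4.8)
The plan is to establish this as the multiplicative analogue of Lemma~\ref{lem:block-perm}: every mainstream lookup argument (Plookup, the Halo2 lookup argument, and their sorted-column variants) defines its grand-product accumulator $Z_{\mathrm{L}}$ by a \emph{row-local} recurrence, and the claimed per-row multiplicand $\phi^{\mathrm{L}}(i)$ is precisely the ratio of the numerator and denominator polynomials appearing in that recurrence, evaluated at row $i$. First I would fix the lookup scheme and write down its accumulator update. For a generic compressed-tuple formulation with challenges $(\beta,\gamma)\in\mathbb{F}^2$, letting $u_\kappa(i)$ and $v_\kappa(i)$ denote the $\beta$-folded witness and table/sorted tuples at row $i$ (as the scheme prescribes), this update has the shape
\[
Z_{\mathrm{L}}(i{+}1)\ =\ Z_{\mathrm{L}}(i)\cdot\phi^{\mathrm{L}}(i),\qquad
\phi^{\mathrm{L}}(i)\ =\ \frac{\prod_{\kappa}\big(\gamma+u_\kappa(i)\big)}{\prod_{\kappa}\big(\gamma+v_\kappa(i)\big)},
\]
so $\phi^{\mathrm{L}}(i)$ is a fixed rational function of the row-$i$ (and, in some formulations, row-$(i{+}1)$) register/table values and the global challenges --- hence of data local to the block.

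The remaining steps are routine. I would (i) record that $\phi^{\mathrm{L}}(i)\in\mathbb{F}$ is well defined: for an honestly generated trace the denominators are nonzero, and even against adversarial inputs they vanish only on a negligible fraction of challenge choices (Schwartz--Zippel: $\le D_{\max}/|\mathbb{F}|=1/\mathrm{poly}(T)$), so the recurrence is meaningful with overwhelming probability over Fiat--Shamir; (ii) telescope the recurrence across the contiguous block $H_t=\{i_t,\ldots,i_t+|H_t|-1\}$, identifying $Z_{\mathrm{L,start}}^{(t)}$ with the $Z^{\mathrm{L}}$-entry of the predecessor node's $\mathsf{aux}$, to get $Z_{\mathrm{L,end}}^{(t)}=Z_{\mathrm{L,start}}^{(t)}\cdot\prod_{i\in H_t}\phi^{\mathrm{L}}(i)$; (iii) observe that the block factor $F_t^{\mathrm{L}}:=\prod_{i\in H_t}\phi^{\mathrm{L}}(i)$ is accumulated inside $\mathsf{EvalBlock}$ (Algorithm~\ref{alg:evalblock}) by one running product per accumulator coordinate, i.e.\ $O(c_{\mathrm{L}})=O(1)$ extra field registers, and --- exactly as in Lemma~\ref{lem:block-perm} --- causality forces the blocks to be composed in increasing time order, fixing a unique valid schedule. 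Finally, if the protocol commits to the $Z_{\mathrm{L}}$ column, the per-row values streamed by this recurrence give a block polynomial $G_{t,Z_{\mathrm{L}}}$ that aggregates to the baseline commitment via Lemma~\ref{lem:linear-agg}, just as for the permutation column. (For log-derivative lookups the accumulator is a sum rather than a product; the identical argument then yields an \emph{additive} block factor $\sum_{i\in H_t}\psi^{\mathrm{L}}(i)$, which also streams within $\mathsf{EvalBlock}$.)

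The only delicate point is the treatment of \emph{rotations}: several formulations reference a ``next-row'' value (e.g.\ $t(i{+}1)$ or the sorted column $s(i{+}1)$), so the last multiplicand $\phi^{\mathrm{L}}(i_t+|H_t|-1)$ of block $t$ formally reaches into the first row of block $t{+}1$. I expect this boundary bookkeeping to be the main thing requiring care, though it is not a genuine obstruction and can be discharged in either of two ways: (a) adopt a rotation-free presentation in which the inter-row relationship of the sorted column is enforced by a separate sortedness sub-argument, so that $\phi^{\mathrm{L}}(i)$ depends on row $i$ only; or (b) let the block evaluator receive, as part of its boundary input, one look-ahead row of the rotated columns --- still $O(b_{\text{val}})$ data, of the same order as $\mathsf{boundary}_v\in\mathbb{F}^k$ --- so that $\prod_{i\in H_t}\phi^{\mathrm{L}}(i)$ is computable from block-local data together with the predecessor and one-row-successor boundaries. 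Under either convention the multiplicative, block-factored structure asserted in the lemma holds verbatim, and the space accounting (node value size $b_{\text{val}}=\Theta(\lambda)$, the stack term of Theorem~\ref{thm:cook-mertz}) is unaffected.
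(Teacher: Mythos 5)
The paper does not actually supply a proof of this lemma: both Lemma~\ref{lem:block-perm} and Lemma~\ref{lem:block-lookup} are stated as assertions about standard permutation/lookup arguments, with the surrounding text only explaining how the resulting multiplicative block factors are consumed in $\mathsf{EvalBlock}$ and Algorithm~\ref{alg:commit-func}. Your proposal is correct and fills exactly this gap in the natural way: isolate the scheme's row-local accumulator recurrence as $\phi^{\mathrm{L}}(i)$, note well-definedness of the denominators over a random challenge via Schwartz--Zippel, telescope across the contiguous block $H_t$, and observe that the running product is computed inside $\mathsf{EvalBlock}$ with $O(c_{\mathrm{L}})=O(1)$ extra field registers and must be applied in increasing time order. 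The one genuine subtlety you flag --- that several lookup formulations reference a ``next-row'' value so that the last multiplicand of block $t$ reaches into block $t{+}1$ --- is real, is not acknowledged in the paper, and your two fixes (a rotation-free presentation, or a one-row look-ahead in the boundary data) are both sound and preserve the $b_{\text{val}}=\Theta(\lambda)$ node size. A small simplification to option (b): since the accumulator column is itself produced row-by-row from witness/table data, you can equivalently shift the accounting so that the multiplicand crossing a block boundary is attributed to and computed by the \emph{later} block, which already receives the needed boundary row from $\mathsf{aux}$ of block $t$; this keeps the DAG strictly forward and avoids introducing any successor-to-predecessor dependency into the layered schedule of Lemma~\ref{lem:layered-schedule}. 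Your aside on log-derivative lookups yielding an additive block factor is a correct and worthwhile generalization.
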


\paragraph{Committed accumulator column (lookups).}
If a lookup accumulator column is committed, we analogously stream its per-row values in each $H_t$ to obtain $G_{t,\mathrm{L}}$, commit blockwise, and aggregate via Lemma~\ref{lem:linear-agg} to the baseline lookup-accumulator commitment.

\paragraph{Quotient construction (coefficient basis, no divide-by-zero).}
Let $R$ denote the standard randomized linear combination of constraint residual polynomials and $Z_H$ the vanishing polynomial of $H$. The quotient $Q$ is defined by the polynomial identity $R = Z_H \cdot Q$ (no pointwise division on $H$). When $H$ is a multiplicative subgroup or coset, $Z_H(X)=X^N-c$ is monic of degree $N$, so dividing $R$ by $Z_H$ reduces to a linear-time backward sweep on coefficients. We therefore \emph{commit to $Q$ in coefficient basis} (matching the baseline prover), and compute it in $O(\sqrt{T})$ space via a blocked sweep that emits coefficient blocks and aggregates their commitments using Lemma~\ref{lem:linear-agg}. This avoids division-by-zero at evaluation points and preserves the baseline commitment and transcript for linear PCSs.

\begin{algorithm}[H]
\caption{Commitment tree function $F_v$ at node $v=(m,t)$}
\label{alg:commit-func}
\begin{algorithmic}[1]
\REQUIRE For each child $c_i$: $(\boldsymbol{\mathsf{com}}_i,\mathsf{aux}_i)$ with $\boldsymbol{\mathsf{com}}_i\in\mathcal{C}^k$ one-hot
\ENSURE $(\boldsymbol{\mathsf{com}}_{\mathrm{out}},\mathsf{aux}_{\mathrm{out}})$
\STATE $\text{inputs}\leftarrow \mathsf{ReconstructInputs}(\mathsf{aux}_1,\ldots,\mathsf{aux}_{d_v})$
\STATE $(\text{local},\mathsf{boundary}_{\text{out}},\text{regM\_vals})\leftarrow \mathsf{EvalBlock}(m,t,\text{inputs})$ \hfill (Alg.~\ref{alg:evalblock})
\STATE \textbf{Permutation/lookup checks:} using \text{local}, update
\[
Z_{\text{out}} \leftarrow Z_{\text{in}}\cdot F_t(\beta,\gamma),\qquad
Z^{\mathrm{L}}_{\text{out}} \leftarrow Z^{\mathrm{L}}_{\text{in}}\cdot \prod_{i\in H_t}\phi^{\mathrm{L}}(i),
\]
and abort if any consistency check fails
\STATE \textbf{Form $G_{t,m}$:} If committing in \textbf{evaluation basis}, use $\text{regM\_vals}$ on $H_t$ directly (padding zeros elsewhere); if committing in \textbf{coefficient basis}, interpolate $\text{regM\_vals}$ on $H_t$ and pad $0$ off $H_t$
\STATE Sample $r_v\leftarrow \mathbb{F}$ if hiding PCS, else $r_v\leftarrow 0$
\STATE $\mathsf{com}_{t,m}\leftarrow \mathsf{Commit}(pp,G_{t,m};r_v)$
\STATE Initialize $\boldsymbol{\mathsf{com}}_{\mathrm{out}}\leftarrow \boldsymbol{0}\in\mathcal{C}^k$
\STATE Let $c^\star$ be the child with the same register $m$ (if $t>1$), else $c^\star\leftarrow \bot$
\STATE $\boldsymbol{\mathsf{com}}_{\mathrm{out}}[m] \leftarrow \big(\boldsymbol{\mathsf{com}}_{c^\star}[m]\ \text{if }c^\star\neq\bot\text{ else }0_{\mathcal{C}}\big)\ +\ \mathsf{com}_{t,m}$
\STATE $\mathsf{aux}_{\mathrm{out}}\leftarrow (\mathsf{boundary}_{\text{out}}, Z_{\text{out}}, Z^{\mathrm{L}}_{\text{out}})$
\RETURN $(\boldsymbol{\mathsf{com}}_{\mathrm{out}},\mathsf{aux}_{\mathrm{out}})$
\end{algorithmic}
\end{algorithm}

\paragraph{Aggregation discipline and final aggregator.}
At each $(m,t)$ we carry forward only coordinate $m$ and never duplicate commitments across registers. A virtual aggregator node $v_{\mathrm{agg}}$ with children $\{(m,B)\}_{m\in[k]}$ reassembles the final $k$-tuple by placing each incoming $m$-coordinate into position $m$. By Lemma~\ref{lem:linear-agg} and Corollary~\ref{cor:identical-linear}, for linear PCSs the resulting tuple matches (in distribution, and bit-for-bit for non-hiding) the baseline commitments to the global polynomials. For transcript identity we require a linear PCS in the same fixed basis/SRS as the baseline, and the transcript must expose only the \emph{aggregated} commitments; internal per-block commitments are implementation artifacts, not transcript items.

\subsection{The Main Equivalence Theorem}

\begin{theorem}[Tree Evaluation Equivalence]
\label{thm:equivalence}
Let $C$ be a computation with execution trace length $T$ over a domain $H=\{\omega^0,\dots,\omega^{N-1}\}$ of size $N\ge T$ as in Section~2, with trace polynomials $\{F_j(X)\}_{j=1}^k$ of degree $<N$. There exists a computation tree $\mathcal{T}_C$ and functions $\{F_v\}$ such that:
\begin{enumerate}
    \item Evaluating $\mathcal{T}_C$ with a final aggregator node $v_{\mathrm{agg}}$ (whose children are $\{(m,B)\}_{m\in[k]}$) yields a $k$-tuple $\boldsymbol{\mathsf{com}}_{\mathrm{root}}$ whose distribution is identical to $(\mathsf{Commit}(pp,F_1; R_1),\ldots,\mathsf{Commit}(pp,F_k; R_k))$, where the $R_j$ follow the PCS randomness (uniform in hiding variants; $R_j\!=\!0$ in non-hiding variants).
    \item $\mathcal{T}_C$ has height $h=O(T/b_{\text{blk}})$ and maximum degree $\Delta\le \min\{k,\,1{+}r_{\text{reg}}\}=O(k)$. Each $F_v$ is computable in $O\!\big(k\,b_{\text{blk}} + C_{\text{commit}}(b_{\text{blk}})\big)$ time and $O(b_{\text{blk}})$ space, where $C_{\text{commit}}(b)=O(\mathrm{MSM}(b))=O(b)$ with a preprocessed evaluation-basis SRS (conservatively, $O(b\cdot\mathrm{polylog}\,N)$). With $k=O(1)$ this is $O\!\big(b_{\text{blk}} + C_{\text{commit}}(b_{\text{blk}})\big)$ time.
\end{enumerate}
\end{theorem}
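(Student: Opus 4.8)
The plan is to build $\mathcal{T}_C$ as the unrolling of the computation graph $G_C$ from Section~3, equip each node with the function $F_v$ of Algorithm~\ref{alg:commit-func}, and then verify the two claims in turn: the structural bounds (Part~2) are essentially immediate, while the distributional identity (Part~1) is proved by an induction that simultaneously tracks faithful reconstruction of the execution trace and correct linear aggregation of block commitments. For Part~2, the height is $h=B=\lceil T/b_{\text{blk}}\rceil=O(T/b_{\text{blk}})$ because $G_C$ has exactly $B$ time layers, and the branching factor of $\mathcal{T}_C$ equals the indegree of a computation node, which Lemma~\ref{lem:bounded-indegree} bounds by $\min\{k,1{+}r_{\text{reg}}\}=O(k)$. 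The per-node cost follows by summing the work inside $F_v$: $\mathsf{EvalBlock}$ costs $O(k\,b_{\text{blk}})$ time and $O(b_{\text{blk}}+\deg^{-}(m,t))$ space by Lemma~\ref{lem:graph-simulability-upd}; forming the block factors $F_t(\beta,\gamma)$ and $\prod_{i\in H_t}\phi^{\mathrm L}(i)$ is another $O(k\,b_{\text{blk}})$ field operations in $O(b_{\text{blk}})$ space; interpolating $\text{regM\_vals}$ to coefficient basis (only in that instantiation) costs $O(b_{\text{blk}}\log b_{\text{blk}})$; the commitment MSM costs $C_{\text{commit}}(b_{\text{blk}})=O(\mathrm{MSM}(b_{\text{blk}}))$; and the one-hot update is $O(k)$ group operations. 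With $k=O(1)$ and $\deg^{-}(m,t)=O(1)$ this collapses to $O(b_{\text{blk}}+C_{\text{commit}}(b_{\text{blk}}))$ time and $O(b_{\text{blk}})$ space, absorbing the $O(\lambda)$ needed to hold a constant tuple of group elements.

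For Part~1 I would first establish \emph{trace fidelity} by induction over the layered DAG (equivalently, over the tree with memoization, so each node $(m,t)$ is evaluated exactly once). The base case is the source layer $t=0$, where $\mathsf{aux}_{(m,0)}$ is seeded with the true initial contents of register $m$. For the inductive step, assume every predecessor of $(m,t)$ carries the true end-of-block-$(t{-}1)$ boundary vector, the true permutation accumulator value $Z$, and the true lookup accumulators $Z^{\mathrm L}$. Then $\mathsf{ReconstructInputs}$ assembles exactly the prior-row inputs the baseline evaluator would feed into block $t$; $\mathsf{EvalBlock}$ advances the AIR via the fixed transition polynomials $P_1,\dots,P_r$ and so outputs the true register-$m$ values on $H_t$ and the true end-of-block boundary vector (Lemma~\ref{lem:graph-simulability-upd}); and Lemmas~\ref{lem:block-perm} and~\ref{lem:block-lookup} give $Z_{\text{out}}=Z_{\text{in}}\cdot F_t(\beta,\gamma)$ and $Z^{\mathrm L}_{\text{out}}=Z^{\mathrm L}_{\text{in}}\cdot\prod_{i\in H_t}\phi^{\mathrm L}(i)$, which reproduces the baseline's streamed accumulators precisely because the layered schedule applies blocks in increasing time order (the unique causal schedule). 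Hence at each $(m,t)$ the polynomial $G_{t,m}$ formed inside $F_v$ is exactly the degree-$<N$ block slice of the baseline trace polynomial $F_m$, i.e.\ $F_m=\sum_{t=1}^{B}G_{t,m}$, and similarly for any committed accumulator column.

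Next I would handle \emph{commitment aggregation}. The one-hot discipline in Algorithm~\ref{alg:commit-func} sets $\boldsymbol{\mathsf{com}}_{\text{out}}[m]=\boldsymbol{\mathsf{com}}_{c^\star}[m]+\mathsf{com}_{t,m}$ with $c^\star=(m,t{-}1)$; a second induction on $t$ gives $\boldsymbol{\mathsf{com}}_{(m,t)}[m]=\sum_{t'=1}^{t}\mathsf{com}_{t',m}$, so the aggregator node places $\sum_{t=1}^{B}\mathsf{com}_{t,m}$ into coordinate $m$. By Lemma~\ref{lem:linear-agg} and $F_m=\sum_t G_{t,m}$, this sum equals $\mathsf{Commit}(pp,F_m;\sum_t r_{t,m})$. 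In the non-hiding case every $r_{t,m}=0$, so the result is deterministic and bit-for-bit equal to the baseline commitment; in the hiding case each $r_{t,m}$ is an independent uniform blinder sampled exactly once (again using memoization so duplicated subtrees do not re-sample), hence $\sum_t r_{t,m}$ is uniform and the pair (commitment, implied opening randomness) has the same distribution as the baseline's fresh-randomness commitment to $F_m$. Applying the identical argument coordinate-by-coordinate—and also to the committed $Z$- and $Z^{\mathrm L}$-columns when present—shows $\boldsymbol{\mathsf{com}}_{\mathrm{root}}$ is distributed identically to $(\mathsf{Commit}(pp,F_1;R_1),\dots,\mathsf{Commit}(pp,F_k;R_k))$, proving Part~1 (and Corollary~\ref{cor:identical-linear} then upgrades this to transcript identity whenever only the aggregated commitments are hashed).

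The main obstacle I anticipate is exactly the bookkeeping that makes the hiding case go through cleanly. The DAG-to-tree unrolling nominally duplicates shared nodes, and a careless reading of ``evaluate the tree'' would re-enter a block $(m,t)$ several times and either waste work recomputing it or—worse, in the hiding case—re-sample its blinder, creating dependencies among the summands so that $\sum_t r_{t,m}$ is no longer a sum of independent uniforms. The fix, already flagged in the text, is that the traversal is realized with memoization (or equivalently as a single layered-DAG pass), so each block commitment and each blinder is produced exactly once; one must state this explicitly and check it is compatible with the Cook--Mertz stack discipline invoked in Section~5. A secondary, more routine subtlety is basis hygiene: the linear-aggregation identity of Lemma~\ref{lem:linear-agg} holds only when $G_{t,m}$ is represented in the \emph{same} fixed basis and against the \emph{same} SRS as the baseline, so the ``Form $G_{t,m}$'' step must be spelled out separately for the evaluation-basis and coefficient-basis instantiations (the latter via the monic $Z_H(X)=X^N-c$ backward sweep), verifying in each case that padding off $H_t$ by zeros matches the baseline's encoding.
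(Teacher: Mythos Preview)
Your proposal is correct and follows essentially the same approach as the paper's proof: both derive the structural bounds of Part~2 directly from Lemmas~\ref{lem:bounded-indegree} and~\ref{lem:graph-simulability-upd}, establish the block decomposition $F_m=\sum_t G_{t,m}$, and prove Part~1 by a layered induction that tracks (i) faithful reconstruction of trace values and accumulators and (ii) linear aggregation of one-hot block commitments via Lemma~\ref{lem:linear-agg}, with the hiding case handled by the uniformity of the summed blinders. The paper makes the polynomial identity $F_j=\sum_i G_{i,j}$ explicit via a tiny uniqueness lemma (degree-$<N$ polynomials agreeing on $N$ points are equal), which you assert without isolating; conversely, your discussion of memoization to prevent blinder re-sampling and of basis/SRS hygiene is more explicit than the paper's treatment, but these are refinements of the same argument rather than a different route.
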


\begin{proof}
\textbf{Parameters.} Height and degree follow from $G_C$ (Section~3): $h=B=O(T/b_{\text{blk}})$ and $\Delta\le \min\{k,\,1{+}r_{\text{reg}}\}$. By Lemma~\ref{lem:graph-simulability-upd}, evaluating a block uses $O(k\,b_{\text{blk}})$ time and $O(b_{\text{blk}})$ space (collapsing to $O(b_{\text{blk}})$ time for constant $k$). Commitment time is $C_{\text{commit}}(b_{\text{blk}})$ as stated.

\textbf{Block polynomials and decomposition.}
Partition $H$ into contiguous subdomains $H_1,\ldots,H_B$ with $|H_i|\in\{b_{\text{blk}},\,N-(B-1)b_{\text{blk}}\}$ (padding if $N>T$). For each block $i$ and register $j$, define $G_{i,j}$ to be the unique degree $<N$ polynomial with evaluations
\[
G_{i,j}(\omega^\ell)=
\begin{cases}
W[\ell,j], & \omega^\ell \in H_i,\\
0, & \omega^\ell \notin H_i.
\end{cases}
\]
By construction, $\sum_{i=1}^B G_{i,j}$ and $F_j$ agree on all $\omega^\ell\in H$.

\begin{lemma}[Uniqueness over $H$]\label{lem:uniq-on-H}
If $p,q$ have degree $<N$ and $p(\omega^\ell)=q(\omega^\ell)$ for all $\omega^\ell\in H$ with $|H|=N$, then $p\equiv q$.
\end{lemma}

Applying Lemma~\ref{lem:uniq-on-H} gives $F_j(X)=\sum_{i=1}^B G_{i,j}(X)$ as polynomial identities.

\textbf{Induction over $\mathcal{T}_C$ (commitments).}
Leaves (source nodes) output $(\boldsymbol{0},\mathsf{aux})$, i.e., the all-zero commitment vector together with auxiliary data encoding initial boundary values and the starting accumulator. For an internal node $v=(m,t)$ with same-register predecessor $c^\star=(m,t{-}1)$ when $t>1$, Algorithm~\ref{alg:commit-func} computes $\mathsf{com}_{t,m}=\mathsf{Commit}(G_{t,m}; r_v)$ and sets
\[
\boldsymbol{\mathsf{com}}_{\text{out}}[m] \;=\; \big(\boldsymbol{\mathsf{com}}_{c^\star}[m] \text{ if } t>1 \text{ else } 0_{\mathcal{C}}\big) \;+\; \mathsf{com}_{t,m},\quad
\boldsymbol{\mathsf{com}}_{\text{out}}[\ell]=0\ \ (\ell\neq m).
\]
Thus the $m$-coordinate at $(m,t)$ commits to $\sum_{i\le t} G_{i,m}$. The aggregator $v_{\mathrm{agg}}$ collects the $k$ coordinates from $\{(m,B)\}$ to form $\big(\mathsf{Commit}(F_1;\cdot),\ldots,\mathsf{Commit}(F_k;\cdot)\big)$; in hiding PCSs, per-coordinate blinders sum to a uniform blinder; in non-hiding PCSs commitments are deterministic in both constructions.

\textbf{Induction over $\mathcal{T}_C$ (consistency).}
The local $\mathsf{ConsistencyCheck}$ enforces boundary matching and the correct multiplicative update of the permutation accumulator by the block factor derived from $\text{local}$. Strong induction from leaves upward shows that if all local checks pass, the concatenated trace satisfies the global AIR and permutation constraints; conversely, root consistency forces every child's $\mathsf{aux}$ to agree with the unique global trace, preventing double-counting across duplicated subtrees.

Combining the two inductions completes the proof.
\end{proof}

\bigskip

\section{The Sublinear-Space Prover}

By combining the Tree Evaluation Equivalence (Theorem~\ref{thm:equivalence}) with the space-efficient Cook--Mertz algorithm (Theorem~\ref{thm:cook-mertz}), we construct a sublinear-space zero-knowledge prover. The core idea is to execute the proof generation not by materializing the trace, but by evaluating an implicit cryptographic computation tree.

\subsection{The Streaming Prover Algorithm}
\label{subsec:streaming-prover}

We now give a phase-accurate, Fiat--Shamir–compatible schedule that evaluates the implicit tree in sublinear space and preserves the baseline transcript (for linear PCSs). The prover never materializes the full trace, instead making a small number of \emph{streaming passes} whose memory footprint is $O(\sqrt{T})$ up to polylogarithmic factors.

\paragraph{Parameters.}
Let $b_{\text{blk}}=\lfloor \sqrt{T}\rfloor$, $B=\lceil T/b_{\text{blk}}\rceil$, and fix $\lambda=\Theta(\log T)$. We use the block-respecting graph $G_C$ and the implicit computation tree $\mathcal{T}_C$ rooted at the final node (Section~3).

\paragraph{Fiat--Shamir streaming schedule (Plonkish outline).}
We mirror the baseline order of commitments and challenges. \textbf{Critical invariant}: only the following \emph{aggregated} per-polynomial commitments are fed into Fiat--Shamir in this exact order: (i) public polynomial commitments, (ii) trace polynomial commitments $\{\mathsf{Commit}(F_j)\}_{j \in [k]}$, (iii) permutation accumulator commitment (if used), (iv) quotient polynomial commitment $\mathsf{Commit}(Q)$. Block-level commitments are intermediate artifacts and never enter the transcript.

\begin{enumerate}
    \item \textbf{Phase A (Selector/public commits).} Commit once to public polynomials (selectors, fixed tables); no trace needed.
    \item \textbf{Phase B (Wire commits, no challenges).} For each register $m$ and block $t$, compute $(\text{local},\mathsf{boundary}_{\text{out}},\text{regM\_vals}) \leftarrow \mathsf{EvalBlock}$ (Alg.~\ref{alg:evalblock}); form $G_{t,m}$; produce $\mathsf{Commit}(pp,G_{t,m};r_{t,m})$ and \emph{add} it into the running $m$-coordinate using $F_v$ (Alg.~\ref{alg:commit-func}). \emph{Only} the final per-register aggregate commitment (after summing all blocks) is included in the transcript. By Lemma~\ref{lem:linear-agg}, this aggregate equals (in distribution) the baseline wire commitment.
    \item \textbf{Phase C (Permutation after sampling $(\beta,\gamma)$).} Sample $(\beta,\gamma)$ via FS. Re-stream each block to compute the permutation block factor $F_t(\beta,\gamma)$ (Lemma~\ref{lem:block-perm}); multiply into the running accumulator $Z$. If the baseline protocol commits to the accumulator column $Z$, we commit blockwise and aggregate linearly, emitting only the final aggregate into the transcript; otherwise we maintain $Z$ only as an internal running value for quotient checks.
    \item \textbf{Phase D (Quotient after sampling $\alpha$).} Sample $\alpha$. See boxed procedure below.
    \item \textbf{Phase E (Openings after sampling $\zeta,\ldots$).} Sample evaluation points $(\zeta,\ldots)$. For each polynomial and point, compute $f(\zeta)$ in a \emph{single streaming pass per polynomial per point} (barycentric in Lagrange basis; Horner in coefficient basis) without storing $f$. Produce the standard batched opening proofs, identical in structure to the baseline.
\end{enumerate}

\begin{mdframed}[backgroundcolor=blue!10]
\textbf{Phase D – Streaming $Q$ in coefficient basis}
\begin{enumerate}
    \item Re-stream trace rows to compute $R(\omega^i)$ in order (using fixed FS challenges $\alpha, \beta, \gamma$).
    \item Feed $R(\omega^i)$ into a \textbf{blocked inverse NTT/IFFT} to produce coefficient blocks $\{r_i\}$ with workspace $O(b_{\text{blk}})$.
    \item Run the backward recurrence for division by $X^N-c$ on the fly to emit $q$-coefficient blocks.
    \item \textbf{4a)} If baseline commits $Q$ in \textbf{coefficient basis}: Commit each $q$-block and \textbf{linearly aggregate}. \textbf{4b)} If baseline commits $Q$ in \textbf{evaluation basis}: run a \textbf{blocked NTT} per $q$-block to evaluations, then commit; still aggregate only the per-polynomial commitment into FS.
\end{enumerate}
\end{mdframed}

Because the commitments in Phases A–D that are hashed into FS match the baseline (Cor.~\ref{cor:identical-linear}) and intermediate block-level commitments are not included, the Fiat--Shamir challenges have identical distribution, and thus the final transcript distribution is preserved for linear PCSs.

\begin{algorithm}[H]
\caption{Sublinear-Space ZKP Prover (FS-ordered streaming)}
\label{alg:prover}
\begin{algorithmic}[1]
\REQUIRE Arithmetic circuit $C$ of trace length $T$, witness $w$, public params $pp$
\ENSURE Proof $\pi$
\STATE \textbf{Params:} $b_{\text{blk}}:=\lfloor \sqrt{T}\rfloor$, $B:=\lceil T/b_{\text{blk}}\rceil$; define $G_C$ and the oracles for $\mathcal{T}_C$
\STATE \textbf{Phase A:} Commit to public polynomials (selectors, tables)
\STATE \textbf{Phase B (wires):} For $t=1$ to $B$ and each register $m\in[k]$: compute $(\text{local},\mathsf{boundary}_{\text{out}},\text{regM\_vals})\leftarrow \mathsf{EvalBlock}(m,t,\cdot)$; form $G_{t,m}$; $\mathsf{com}_{t,m}\leftarrow \mathsf{Commit}(pp,G_{t,m};r_{t,m})$; aggregate into the $m$-coordinate using $F_v$; \emph{do not} include block $\mathsf{com}_{t,m}$ in the transcript
\STATE Sample $(\beta,\gamma)$ via FS
\STATE \textbf{Phase C (permutation):} For $t=1$ to $B$: compute $F_t(\beta,\gamma)$ by re-streaming block $t$; update $Z\leftarrow Z\cdot F_t$; if committing to $Z$, commit blockwise and aggregate linearly, emitting only the aggregate
\STATE Sample $\alpha$ via FS
\STATE \textbf{Phase D (quotient):} Execute boxed streaming procedure for $Q$; emit only the aggregate commitment
\STATE Sample $(\zeta,\ldots)$ via FS
\STATE \textbf{Phase E (openings):} For each required $f$: compute $f(\zeta)$ in a single streaming pass; produce batched openings $\pi_{\mathrm{open}}$
\STATE Assemble $\pi$ from (aggregated) commitments, evaluation claims, and openings; \textbf{return} $\pi$
\end{algorithmic}
\end{algorithm}

\paragraph{Space and time.}
The live workspace at any point consists of $O(b_{\text{blk}})$ local state, $O(1)$ accumulators for permutation/lookup factors, and the Cook--Mertz traversal stack of size
\[
O\!\big(\log|\mathcal{T}_C|\cdot\log\log|\mathcal{T}_C|\big)
\;=\;O\!\big(\log(T/b_{\text{blk}})\cdot\log\log(T/b_{\text{blk}})\big)
\]
for constant arity. Hence total space
\[
S(T)\ =\ O\!\big(b_{\text{blk}}+\lambda+\log(T/b_{\text{blk}})\log\log(T/b_{\text{blk}})\big).
\]
With a preprocessed evaluation-basis SRS, each block commit costs $C_{\text{commit}}(b_{\text{blk}}) = O(\mathrm{MSM}(b_{\text{blk}})) = O(b_{\text{blk}})$ using Pippenger windowing (SRS tables are not counted as prover workspace); across $kB=\Theta(T/b_{\text{blk}})$ blocks this is $O(T)$ MSMs, plus linear-time streaming for permutation/quotient passes and standard batched openings. The schedule uses a constant number of streaming passes (wires, permutation, quotient, openings), each linear in $T$.

\paragraph{Parallelism caveat.} Parallelizing across $p$ threads must carefully partition work so peak RAM remains $O(\sqrt{T})$; naive parallelization multiplies workspace by $p$.

\subsection{Oracle Implementation Details}

Each oracle is space-efficient and avoids storing the full computation graph:
\begin{itemize}
    \item \textbf{$\mathcal{O}_{\mathrm{children}}(v)$:} Represent $v$ by a path descriptor in $O(\log T)$ space. Predecessors are derived by scanning the transition template from the AIR and the circuit wiring, requiring only streaming access to $C$.
    \item \textbf{$\mathcal{O}_{\mathrm{leaf}}(v)$:} For source nodes, read the $O(k)$ initial boundary values needed to seed block $t{=}1$ and set the starting permutation accumulator (typically $1$). No commitments are created at leaves. Space: $O(k)=O(1)$.
    \item \textbf{$\mathcal{O}_{\mathrm{func}}(v,\cdot)$:} Implements Algorithm~\ref{alg:commit-func}; evaluates the block and performs interpolation/commitment using $O(b_{\text{blk}})$ space.
\end{itemize}

\subsection{Complexity Analysis}

\begin{theorem}[Sublinear Space Complexity]
\label{thm:space-complexity}
With $\lambda=\Theta(\log T)$, Algorithm~\ref{alg:prover} uses
\[
S(T)=O\!\big(b_{\text{blk}}+\lambda+\log(T/b_{\text{blk}})\log\log(T/b_{\text{blk}})\big)\ \text{ space}.
\]
Choosing $b_{\text{blk}}=\Theta(\sqrt{T})$ yields
\[
S(T)=O\!\big(\sqrt{T}+\log T\log\log T\big)=O\!\big(\sqrt{T}\cdot\mathrm{polylog}\,T\big),
\]
where the $\log T \log\log T$ term is lower-order for practical $T$.
\end{theorem}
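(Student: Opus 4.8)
The strategy is to account for the prover's live workspace one component at a time and then take the (up to constant) sum. At any instant during Algorithm~\ref{alg:prover} the allocated memory consists of: (a) the traversal stack of the space-efficient evaluator running on the implicit tree $\mathcal{T}_C$ (equivalently, the memoized layer-by-layer sweep of $G_C$); (b) the scratch of the single currently-active node function $F_v$, i.e.\ one call to $\mathsf{EvalBlock}$ (Alg.~\ref{alg:evalblock}) together with the interpolation/NTT and the streaming commitment MSM of Algorithm~\ref{alg:commit-func}; (c) the $O(k)=O(1)$ running global accumulators ($\mathsf{boundary}$, $Z$, $Z^{\mathrm L}$) plus the persistent Fiat--Shamir hash state; and (d) the oracle descriptors for $\mathcal{O}_{\mathrm{children}},\mathcal{O}_{\mathrm{leaf}},\mathcal{O}_{\mathrm{func}}$ and the per-phase re-streaming buffers used in Phases C--E. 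Since only $O(1)$ of these are ever live simultaneously, $S(T)$ is bounded by the sum of their individual maxima.

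First I would bound (a). By Theorem~\ref{thm:equivalence} the tree has branching $\Delta\le\min\{k,1{+}r_{\text{reg}}\}=O(1)$ and node value bit-length $b_{\text{val}}=\Theta(\lambda)$ (a constant tuple of group elements and $O(1)$ field elements). Because the traversal never duplicates a computation-graph node --- we memoize, and the layered schedule of Lemma~\ref{lem:layered-schedule} discards each layer once its $O(k)$ consumers are done, precluding recomputation blowup --- the evaluator's effective instance size is the number of \emph{distinct} nodes $|V(G_C)|=O(kB)=O(T/b_{\text{blk}})$. Plugging $\Delta=O(1)$, $b_{\text{val}}=\Theta(\lambda)$ and $n=O(T/b_{\text{blk}})$ into Theorem~\ref{thm:cook-mertz} yields a stack of size $O\!\big(\Delta\, b_{\text{val}}+\log n\log\log n\big)=O\!\big(\lambda+\log(T/b_{\text{blk}})\log\log(T/b_{\text{blk}})\big)$.

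Next I would bound (b)--(d). One call to $\mathsf{EvalBlock}(m,t,\cdot)$ uses $O(b_{\text{blk}}+\deg^{-}(m,t))=O(b_{\text{blk}})$ space by Lemma~\ref{lem:graph-simulability-upd} (it stores only predecessor boundary vectors and the length-$|H_t|\le b_{\text{blk}}$ output vectors); forming $G_{t,m}$ --- either retaining its evaluation vector or running a size-$b_{\text{blk}}$ inverse NTT to coefficients --- and streaming those $O(b_{\text{blk}})$ field elements into a Pippenger-windowed MSM stays within $O(b_{\text{blk}})$, since by the SRS discipline of Section~\ref{subsec:pcs} the precomputed bases are not counted as prover workspace and, by Lemma~\ref{lem:linear-agg}, each block commitment is merely added into a running group accumulator of size $O(\lambda)$. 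The Phase~D box uses a \emph{blocked} inverse NTT with $O(b_{\text{blk}})$ scratch and the constant-memory backward sweep for division by $X^N-c$; Phase~E's barycentric (Lagrange basis) or Horner (coefficient basis) evaluation of each $f(\zeta)$ is a single pass with $O(1)$ accumulators. The global accumulators and the Fiat--Shamir state occupy $O(\lambda)$ bits, and each oracle represents a node by an $O(\log T)$-bit path descriptor and enumerates its predecessors by a streaming scan of the fixed AIR template, costing $O(\log T)$. Hence (b)$+$(c)$+$(d) $=O(b_{\text{blk}}+\lambda+\log T)=O(b_{\text{blk}}+\lambda)$ once we use $\lambda=\Omega(\log T)$.

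Summing the two contributions will give $S(T)=O\!\big(b_{\text{blk}}+\lambda+\log(T/b_{\text{blk}})\log\log(T/b_{\text{blk}})\big)$, the first claimed bound. Choosing $b_{\text{blk}}=\Theta(\sqrt{T})$ makes the leading term $\Theta(\sqrt{T})$; with $b_{\text{blk}}=\sqrt T$ one has $\log(T/b_{\text{blk}})=\tfrac12\log T$ and $\log\log(T/b_{\text{blk}})=\Theta(\log\log T)$, so the traversal term is $\Theta(\log T\log\log T)$, and $\lambda=\Theta(\log T)$ is absorbed, giving $S(T)=O(\sqrt{T}+\log T\log\log T)=O(\sqrt{T}\cdot\mathrm{polylog}\,T)$; since $\log T\log\log T=o(\sqrt T)$ this is $O(\sqrt T)$ up to the stated lower-order term. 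The hard part will be establishing that every subroutine genuinely fits in $O(b_{\text{blk}})$ scratch \emph{at the same time} as the $O(\log n\log\log n)$ evaluator stack: concretely, that the evaluator never materializes $\mathcal{T}_C$ nor recomputes shared subtrees (so the relevant $n$ is the DAG node count, not the exponentially larger unrolled-tree count), and that interpolation/NTT and the commitment MSM stream block-by-block so no instant holds more than $O(b_{\text{blk}})$ field/group elements --- which is exactly where the preprocessed-SRS convention and Lemma~\ref{lem:linear-agg} do the work. Verifying that the Phase~D blocked inverse NTT is truly out-of-core in $O(b_{\text{blk}})$ memory and that the re-streaming passes of Phases~C--E never secretly buffer a whole column is the principal bookkeeping obligation.
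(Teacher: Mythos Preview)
Your proposal is correct and follows essentially the same decomposition as the paper: the paper's proof simply splits $S(T)$ into oracle space $O(b_{\text{blk}}+\log T)$ (from $\mathcal{O}_{\mathrm{func}}$, $\mathcal{O}_{\mathrm{children}}$, $\mathcal{O}_{\mathrm{leaf}}$) and the Cook--Mertz stack $O(\lambda+\log(T/b_{\text{blk}})\log\log(T/b_{\text{blk}}))$, then sums and optimizes at $b_{\text{blk}}=\Theta(\sqrt{T})$. Your finer-grained (a)--(d) partition and explicit Phase C--E bookkeeping are more detailed than the paper's own proof, and your care in distinguishing the DAG node count $|V(G_C)|=O(T/b_{\text{blk}})$ from the unrolled-tree size is a point the paper handles less explicitly in this particular proof.
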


\begin{proof}
Total space is the sum of oracle space and the Cook--Mertz stack space.
\begin{itemize}
\item \textbf{Oracle space.} $\mathcal{O}_{\mathrm{func}}$ uses $O(b_{\text{blk}})$ space (Algorithm~\ref{alg:commit-func} and Lemma~\ref{lem:graph-simulability-upd}); $\mathcal{O}_{\mathrm{children}}$ and $\mathcal{O}_{\mathrm{leaf}}$ add $O(\log T)$ and $O(1)$, respectively. Hence $S_{\mathrm{oracle}}=O(b_{\text{blk}}+\log T)$, which is subsumed by the stated bound.
\item \textbf{Cook--Mertz space.} By Theorem~\ref{thm:cook-mertz} and $|\mathcal{T}_C|=\Theta(T/b_{\text{blk}})$ at constant arity,
\[
S_{\mathrm{CM}}=O\!\big(\Delta\cdot b_{\text{val}}+\log|\mathcal{T}_C|\cdot\log\log|\mathcal{T}_C|\big)
=O\!\big(\lambda+\log(T/b_{\text{blk}})\log\log(T/b_{\text{blk}})\big).
\]
\item \textbf{Total and optimization.} Therefore
\[
S(T)=O\!\big(b_{\text{blk}}+\lambda+\log(T/b_{\text{blk}})\log\log(T/b_{\text{blk}})\big).
\]
Setting $b_{\text{blk}}=\Theta(\sqrt{T})$ gives the claimed bound.
\end{itemize}
\end{proof}

\begin{theorem}[Prover Time Complexity]
Algorithm~\ref{alg:prover} runs in time
\[
O\!\Big(\tfrac{T}{b_{\text{blk}}}\cdot\big(b_{\text{blk}}+C_{\text{commit}}(b_{\text{blk}})\big)\Big)\cdot \mathrm{poly}(\lambda),
\]
where $C_{\text{commit}}(b)=O(\mathrm{MSM}(b))=O(b)$ with a preprocessed evaluation-basis SRS (conservatively, $O(b\cdot \mathrm{polylog}\,N)$). With $b_{\text{blk}}=\Theta(\sqrt{T})$ and $\lambda=\Theta(\log T)$, this yields $O\!\big(T\cdot \mathrm{polylog}\,N\big)$ time; the wire-commit portion is $O(T)$ MSMs.
\end{theorem}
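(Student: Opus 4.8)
The plan is to read the running time directly off the phase structure of Algorithm~\ref{alg:prover}, treating the implicit tree evaluation as a flat, layer-by-layer sweep so that each of the $kB=\Theta(T/b_{\text{blk}})$ computation nodes is processed exactly once, and then summing the per-phase costs. I count field operations and group operations, each of which costs $\mathrm{poly}(\lambda)$ bit operations since $\mathbb{F}$ and $\mathcal{C}$ have size $2^{\Theta(\lambda)}$, and absorb that factor at the end. For the per-node cost: by Lemma~\ref{lem:graph-simulability-upd}, one block $(m,t)$ costs $O(k\,b_{\text{blk}})$ field operations in $\mathsf{EvalBlock}$; forming $G_{t,m}$ is $O(b_{\text{blk}})$ in evaluation basis or $O(b_{\text{blk}}\log b_{\text{blk}})$ in coefficient basis (a size-$b_{\text{blk}}$ interpolation); the block commitment is $C_{\text{commit}}(b_{\text{blk}})=O(\mathrm{MSM}(b_{\text{blk}}))=O(b_{\text{blk}})$ with the preprocessed evaluation-basis SRS via Pippenger windowing (or the conservative $O(b_{\text{blk}}\cdot\mathrm{polylog}\,N)$); and the one-hot aggregation step of $F_v$ (Algorithm~\ref{alg:commit-func}) is $O(1)$ group additions. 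Hence one node costs $O\!\big(k\,b_{\text{blk}}+C_{\text{commit}}(b_{\text{blk}})\big)$.

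\textbf{Phase-by-phase.} Phase A is a one-time commitment to the public polynomials, subsumed. Phase B sweeps all $kB$ nodes once (Lemma~\ref{lem:layered-schedule}), giving $O\!\big(kB\cdot(k\,b_{\text{blk}}+C_{\text{commit}}(b_{\text{blk}}))\big)$. Phase C re-streams each of the $B$ blocks to form the multiplicative factor $F_t(\beta,\gamma)$ of Lemma~\ref{lem:block-perm} at $O(k\,b_{\text{blk}})$ each, plus one more $C_{\text{commit}}(b_{\text{blk}})$ per block if the baseline commits the $Z$-column; same order. Phase D re-streams the trace to evaluate the randomized residual $R$ on all of $H$ in $O(k\,N)$, runs the blocked inverse NTT in $O(N\log N)$, performs the linear backward sweep for division by $X^N-c$ in $O(N)$, commits the $B$ coefficient blocks in $O(B\cdot C_{\text{commit}}(b_{\text{blk}}))$, and in the evaluation-basis variant 4b adds one forward NTT at $O(N\log N)$; total $O\!\big(N\log N + B\cdot C_{\text{commit}}(b_{\text{blk}})\big)$. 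Phase E handles a constant number of polynomials at a constant number of points, each in a single $O(N)$ barycentric/Horner pass, followed by the standard batched opening, which is $O(\mathrm{MSM}(N))=O(N)$ (or $O(N\log N)$); i.e.\ $O(N\cdot\mathrm{polylog}\,N)$.

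\textbf{Combining.} Using $B\,b_{\text{blk}}=\Theta(T)$, $N=\mathrm{poly}(T)$ (typically $N=O(T)$), $k=O(1)$, and $C_{\text{commit}}(b)=O(b)$, the dominant contributions are the $\Theta(T/b_{\text{blk}})$ node evaluations/commits at $O(b_{\text{blk}}+C_{\text{commit}}(b_{\text{blk}}))$ each and the $O(N\log N)$ NTT/opening passes; the oracle calls ($O(\log T)$ each, $O(B)$ of them) and the Cook--Mertz stack maintenance ($O(\mathrm{polylog})$ per node) are lower order. This yields $O\!\big(\tfrac{T}{b_{\text{blk}}}(b_{\text{blk}}+C_{\text{commit}}(b_{\text{blk}}))\big)\cdot\mathrm{poly}(\lambda)$ as stated; with $b_{\text{blk}}=\Theta(\sqrt T)$ and $\lambda=\Theta(\log T)$ this is $O(T\cdot\mathrm{polylog}\,N)$, and the wire-commit part is $kB=\Theta(T/b_{\text{blk}})$ MSMs of size $b_{\text{blk}}$, i.e.\ $\Theta(T)$ scalar multiplications total (``$O(T)$ MSMs'').

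\textbf{Main obstacle.} The one step needing genuine care is ruling out a recomputation blowup in the tree evaluation: a naive recursive descent on $\mathcal{T}_C$ would re-evaluate each block once per root-to-leaf path through it, inflating the time by $\mathrm{poly}(B)$. The fix is to use the layered schedule of Lemma~\ref{lem:layered-schedule}, which is precisely what Phases B--C of Algorithm~\ref{alg:prover} do: the $kB$ nodes are computed in topological layers, each exactly once, and discarded after their $O(k)$ next-layer consumers have read them, keeping the live frontier to $O(k\lambda)$ state (subsumed in the $O(\sqrt T)$ budget) while eliminating recomputation. The Cook--Mertz recursion is invoked only for its $O(\mathrm{polylog})$ \emph{space} guarantee in the general (non-constant-$k$) regime; since in both realizations recomputation blowup is avoided (cf.\ §4), the time bound is insensitive to which traversal is used. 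A minor accounting point is to confirm that the single $O(N\log N)$ inverse-NTT pass in Phase D is the only super-$O(T)$ term, which is why the bound carries a $\mathrm{polylog}\,N$ rather than being cleanly $O(T)$.
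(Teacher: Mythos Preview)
Your proposal is correct and follows essentially the same approach as the paper: count the per-node cost $O(b_{\text{blk}}+C_{\text{commit}}(b_{\text{blk}}))$, multiply by the $kB=\Theta(T/b_{\text{blk}})$ nodes, and justify the ``exactly once'' invocation via the layered schedule (the paper's ``Memoization schedule'' paragraph, your ``Main obstacle''). Your phase-by-phase accounting of Phases A--E is more explicit than the paper's terse ``constant number of streaming passes, each linear in $T$,'' and your identification of the Phase~D blocked NTT as the source of the $\mathrm{polylog}\,N$ factor is a detail the paper leaves implicit, but the underlying argument is the same.
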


\begin{proof}
Each block/register pair $(t,m)$ triggers one call to $\mathcal{O}_{\mathrm{func}}$ (Algorithm~\ref{alg:commit-func}), costing
\[
O(b_{\text{blk}})\ \text{(block evaluation)}\ +\ O\!\big(C_{\text{commit}}(b_{\text{blk}})\big)\ \text{(interpolation/commit)}.
\]
There are $kB=\Theta(T/b_{\text{blk}})$ such calls with $k=O(1)$, so the total field/group work is
\[
\Theta\!\Big(\frac{T}{b_{\text{blk}}}\cdot\big(b_{\text{blk}}+C_{\text{commit}}(b_{\text{blk}})\big)\Big),
\]
multiplied by scheme-dependent $\mathrm{poly}(\lambda)$ factors. Substituting $b_{\text{blk}}=\Theta(\sqrt{T})$ yields the stated bound. The schedule uses a constant number of streaming passes, each linear in $T$, and thus does not increase the asymptotic time.
\paragraph{Memoization schedule inside the evaluation.}
\emph{Claim.} There exists a schedule such that each node $(m,t)$ is computed exactly once and retained only until its $O(1)$ consumers at layer $t{+}1$ have consumed its $\mathsf{aux}$; consequently at most $O(1)$ blocks per layer are live, plus either (i) an $O(1)$ BFS queue or (ii) the Cook--Mertz DFS stack of size $O(\log(T/b_{\text{blk}})\log\log(T/b_{\text{blk}}))$.
\emph{Justification.} Process the computation graph in increasing $t$; within each layer, any topological order suffices. By first-order locality with read-degree $r_{\text{reg}}=O(1)$, a node $(m,t)$ is needed only by its successor $(m,t{+}1)$ and by at most $r_{\text{reg}}$ cross-register consumers in the next layer. Once those $O(1)$ consumers have read its $\mathsf{aux}$, $(m,t)$ can be discarded. This yields exactly $kB=\Theta(T/b_{\text{blk}})$ invocations of $\mathcal{O}_{\mathrm{func}}$ without duplication and keeps live state within the claimed bounds.
\end{proof}

\bigskip

\section{Security Preservation}

A critical aspect of our construction is that it achieves its space efficiency without compromising security. We prove that the sublinear-space prover \emph{inherits} completeness, soundness, and zero-knowledge from the underlying PCS-IOP/NIZK protocol (instantiated over the AIR), and that the change of evaluation order (via tree evaluation) leaves transcript distributions unchanged (Theorem~\ref{thm:equivalence}) for \emph{linear} PCSs when the same basis/SRS is used and only aggregate commitments enter Fiat--Shamir. For hash-/FRI-based PCSs, the prover remains sublinear-space but transcripts typically differ; security then follows from the baseline hash-commitment analysis.

\subsection{Completeness}

\begin{theorem}[Completeness Preservation]
If the underlying proof system is complete, then for any valid statement--witness pair $(x,w)\in R$, Algorithm~\ref{alg:prover} produces a proof $\pi$ that the verifier accepts with the \emph{same} completeness guarantee as the underlying system (perfect completeness: acceptance with probability $1$; otherwise, the identical acceptance probability over the verifier's randomness / Fiat--Shamir challenges).
\end{theorem}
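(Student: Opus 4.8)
The plan is to reduce completeness of $\mathcal{P}'$ (Algorithm~\ref{alg:prover}) to completeness of the baseline prover $\mathcal{P}$ by showing that, on any statement--witness pair $(x,w)\in R$ and any fixed public randomness / programmed random oracle, the proof $\pi'$ is drawn from exactly the same distribution as the baseline proof $\pi$ (and is literally equal to it in the non-hiding case). Since the verifier algorithm is untouched, its acceptance probability over its own randomness and over the Fiat--Shamir challenges is then identical for $\pi'$ and $\pi$, which yields the claimed guarantee (acceptance with probability $1$ when the baseline has perfect completeness, and the same acceptance probability otherwise).

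First I would invoke Theorem~\ref{thm:equivalence}: evaluating the implicit tree $\mathcal{T}_C$ with the aggregator node $v_{\mathrm{agg}}$ produces the $k$-tuple of trace commitments with the same distribution as $(\mathsf{Commit}(pp,F_1),\ldots,\mathsf{Commit}(pp,F_k))$, using that $F_j=\sum_i G_{i,j}$ holds as a polynomial identity (Lemma~\ref{lem:uniq-on-H}) and that a linear PCS aggregates block commitments correctly (Lemma~\ref{lem:linear-agg}). Next I would check the remaining transcript items generated by the streaming schedule: the permutation (and, if present, lookup) accumulator columns are reconstructed blockwise via the multiplicative factorizations of Lemma~\ref{lem:block-perm} and Lemma~\ref{lem:block-lookup}, applied in increasing time order, so the committed polynomial equals the baseline $Z$ (resp.\ $Z_{\mathrm{L}}$) as a polynomial and its aggregated commitment matches; the quotient $Q$ is obtained from the polynomial identity $R=Z_H\cdot Q$ via the exact backward recurrence for division by $X^N-c$ on streamed coefficient blocks, hence equals the baseline $Q$ coefficient-for-coefficient, and its aggregated commitment matches; and each evaluation claim $f(\zeta)$ is computed by a single barycentric pass (Lagrange basis) or Horner pass (coefficient basis) returning the exact value $f(\zeta)$, so the batched opening proofs are formed from inputs identical to the baseline's.

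Then I would close the Fiat--Shamir loop. By Corollary~\ref{cor:identical-linear}, only the aggregated per-polynomial commitments are hashed, in the same order and encoding as the baseline, while the intermediate per-block commitments never enter the transcript; so by induction over Phases A--E every challenge $(\beta,\gamma)$, $\alpha$, $(\zeta,\ldots)$ is sampled against a transcript with the same distribution as the baseline's, and hence all challenges---and therefore all subsequent prover messages---inherit the baseline distribution. In the non-hiding case every commitment is deterministic and every message is a deterministic function of $(x,w)$ and the challenges, so $\pi'=\pi$ bit-for-bit and perfect completeness transfers verbatim; in the hiding case the per-register blinder used by $\mathcal{P}'$ is a sum of independent uniform block blinders, hence uniform (and can be forced to equal $r_{\text{base}}$ by fixing the last block's blinder), so the joint distribution of commitments, openings, and challenges equals the baseline's and the verifier's acceptance probability is unchanged.

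The step I expect to require the most care is the Fiat--Shamir synchronization together with the bit-level faithfulness of the streaming quotient and opening passes: one must verify that Algorithm~\ref{alg:prover} absorbs \emph{exactly} the baseline transcript items---no stray per-block commitments, the same serialization---and that the out-of-core inverse NTT, the $X^N-c$ backward sweep, and the barycentric/Horner evaluations reproduce the baseline's $R$, $Q$, and $\{f(\zeta)\}$ rather than merely agreeing up to polynomial equality modulo $Z_H$ on $H$. Once this bookkeeping is pinned down, completeness is inherited immediately, since $\mathcal{P}'$ and $\mathcal{P}$ feed the verifier identically distributed (indeed, in the non-hiding case identical) transcripts.
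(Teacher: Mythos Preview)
Your proposal is correct and follows essentially the same approach as the paper: establish that the streaming prover's transcript has the same distribution as the baseline's (via Theorem~\ref{thm:equivalence}, Lemmas~\ref{lem:block-perm}/\ref{lem:block-lookup}, and the aggregate-only Fiat--Shamir discipline of Corollary~\ref{cor:identical-linear}), then conclude that the unchanged verifier accepts with identical probability. Your write-up is in fact more thorough than the paper's---you explicitly walk through the quotient and opening phases and flag the Fiat--Shamir synchronization as the delicate point---whereas the paper compresses the argument into a short chain (block-respecting transformation $\to$ Tree Evaluation Equivalence $\to$ Cook--Mertz correctness $\to$ same aggregate commitments $\to$ same FS challenges $\to$ baseline completeness) and additionally names the Cook--Mertz traversal's faithfulness as an explicit link, which you leave implicit.
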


\begin{proof}
Completeness follows from the chain: block-respecting transformation $\rightarrow$ Tree Evaluation equivalence $\rightarrow$ Cook-Mertz traversal $\rightarrow$ same aggregate commitments $\rightarrow$ same FS challenges $\rightarrow$ baseline completeness. Specifically:
\begin{enumerate}
    \item The block-respecting transformation (Lemma~\ref{lem:block-respecting}) is semantics-preserving.
    \item The Tree Evaluation Equivalence (Theorem~\ref{thm:equivalence}) shows that evaluating the computation tree yields the same $k$ trace-commitments as the standard linear-space prover, together with consistent auxiliary data; in particular, the multiplicative block factors for permutation and lookup accumulators (Lemmas~\ref{lem:block-perm} and~\ref{lem:block-lookup}) ensure streaming reproduces the baseline accumulators.
    \item The Cook--Mertz procedure faithfully evaluates the (implicit) tree to the same root values with $O(\log|\mathcal{T}_C|\log\log|\mathcal{T}_C|)$ stack.
    \item Only aggregate commitments are included in Fiat--Shamir, matching the baseline ordering and encoding; hence the same challenges are derived.
\end{enumerate}
By completeness of the underlying system, the verifier accepts with the same guarantee.
\end{proof}

\subsection{Soundness}

We model soundness against a prover that outputs an accepting proof for a false statement. The security parameter and degree bound are as in Section~2.

\begin{definition}[Soundness Game]
The game $\mathsf{Game}_{\mathcal{A}}^{\mathrm{Sound}}(1^\lambda,C)$:
\begin{enumerate}
    \item The challenger samples public parameters $pp \leftarrow \mathsf{Setup}(1^\lambda, D)$ (for a maximum polynomial degree $D< N$) and fixes an evaluation domain $H=\{\omega^0,\ldots,\omega^{N-1}\}$ of size $N\ge T$.
    \item The adversary $\mathcal{A}(pp)$ outputs a statement $x$ and purported proof $\pi$.
    \item $\mathcal{A}$ wins if $\mathsf{Verify}(pp,x,\pi)=1$ and $x\notin L(R)$.
\end{enumerate}
\end{definition}

\begin{theorem}[Soundness Preservation]
\label{thm:soundness-preservation}
Let $\Pi_{\mathrm{std}}$ be the underlying PCS-IOP/NIZK instantiated for the AIR, and let $\Pi_{\mathrm{str}}$ be the streaming/tree-evaluation prover from Algorithm~\ref{alg:prover}. If $\Pi_{\mathrm{std}}$ is computationally sound (under the binding of the PCS and its algebraic checks), then $\Pi_{\mathrm{str}}$ is computationally sound. In particular, any PPT adversary $\mathcal{A}$ that wins $\mathsf{Game}_{\mathcal{A}}^{\mathrm{Sound}}$ against $\Pi_{\mathrm{str}}$ with advantage $\varepsilon$ yields a PPT adversary $\mathcal{B}$ that wins the soundness game against $\Pi_{\mathrm{std}}$ with the \emph{same} advantage $\varepsilon$ (up to negligible loss).
\end{theorem}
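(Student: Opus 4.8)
The plan is to exploit the fact that \emph{soundness is a property of the verifier, not of the honest prover}. The game $\mathsf{Game}_{\mathcal{A}}^{\mathrm{Sound}}(1^\lambda,C)$ references only $\mathsf{Setup}$, the evaluation domain $H$, the adversary, the statement $x$, the purported proof $\pi$, and the predicate $\mathsf{Verify}(pp,x,\pi)$; the honest prover algorithm never appears. So the first step is to verify that $\Pi_{\mathrm{str}}$ and $\Pi_{\mathrm{std}}$ instantiate \emph{the same} $\mathsf{Setup}$, the same domain $H$, and — crucially — the same verification procedure. The SRS/basis discipline of Section~\ref{subsec:pcs} fixes the public parameters to match the baseline, and for a linear PCS under the aggregate-only Fiat--Shamir invariant (Corollary~\ref{cor:identical-linear}, Theorem~\ref{thm:equivalence}), the proof output by Algorithm~\ref{alg:prover} has exactly the baseline format: the transcript contains only the aggregated per-polynomial commitments (bit-for-bit identical in non-hiding variants, identically distributed in hiding variants), the same evaluation claims, and batched openings of the same shape. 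Since $\mathsf{Verify}$ recomputes the Fiat--Shamir challenges from this transcript and then runs the baseline algebraic checks, $\mathsf{Verify}$ for $\Pi_{\mathrm{str}}$ is \emph{literally the same algorithm} as $\mathsf{Verify}$ for $\Pi_{\mathrm{std}}$.

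Given this, the reduction is the identity map: set $\mathcal{B}:=\mathcal{A}$. An execution of $\mathsf{Game}_{\mathcal{A}}^{\mathrm{Sound}}$ against $\Pi_{\mathrm{str}}$ and an execution of $\mathsf{Game}_{\mathcal{B}}^{\mathrm{Sound}}$ against $\Pi_{\mathrm{std}}$ sample $pp$ from the same distribution, feed $\mathcal{A}=\mathcal{B}$ the same input, and apply the same acceptance predicate to the same output $(x,\pi)$; the two games are therefore identically distributed, and $\mathcal{B}$ wins against $\Pi_{\mathrm{std}}$ with probability exactly $\varepsilon$. The ``up to negligible loss'' slack in the statement is not even consumed in the linear-PCS case; I would retain it only to absorb cosmetic encoding conventions. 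If one wants \emph{knowledge} soundness rather than plain soundness, the same observation applies to the extractor: the baseline extractor (rewinding-based, or algebraic/AGM for KZG) is defined solely in terms of $\mathsf{Verify}$ and the PCS, so it extracts a witness from any successful $\mathcal{A}$ against $\Pi_{\mathrm{str}}$ verbatim, with correctness guaranteed by binding of the PCS.

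The one place that genuinely requires care — and the main obstacle — is the hash-/FRI-based regime flagged at the outset of this section, where the streaming prover does \emph{not} reproduce the baseline transcript: it emits per-block Merkle roots together with a linear-aggregation step, so the proof structure, and hence $\mathsf{Verify}$, changes, and the identity reduction no longer applies. There I would instead re-run the baseline hash-commitment soundness analysis on the \emph{modified} protocol: the modified commitment (a vector of block Merkle roots plus the aggregation relations) is still position-binding and extractable to the same underlying polynomials, the FRI low-degree test and the quotient identity $R=Z_H\cdot Q$ are unchanged as algebraic relations, so the baseline knowledge extractor still recovers a consistent trace, and any deviation contradicts either collision-resistance of the hash or the Schwartz--Zippel bound of Section~2. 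This yields the same asymptotic soundness error as the baseline, with only an additive negligible loss from the extra binding hybrids. I expect this hash-case argument to occupy the bulk of any fully detailed write-up; the linear-PCS case is essentially immediate from the prover-independence of the soundness game.
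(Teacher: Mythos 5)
Your argument is correct and lands on the same conclusion as the paper (a black-box reduction preserving advantage exactly), but your framing is cleaner and in fact fixes a small wrinkle in the paper's write-up. The paper's proof says ``$\mathcal{B}$ runs $\mathcal{A}$ while internally simulating the prover's computations exactly as in $\Pi_{\mathrm{str}}$,'' which is an extraneous step: as you observe, $\mathsf{Game}_{\mathcal{A}}^{\mathrm{Sound}}$ as defined never invokes the honest prover, so there is nothing to simulate and the reduction collapses to the identity $\mathcal{B}:=\mathcal{A}$. The substantive work in both arguments is the same — establishing (via the linear-PCS/same-basis/aggregate-only-FS discipline, Corollary~\ref{cor:identical-linear} and Theorem~\ref{thm:equivalence}) that $\Pi_{\mathrm{str}}$ and $\Pi_{\mathrm{std}}$ share $\mathsf{Setup}$ and present the verifier with proofs of identical format, so that $\mathsf{Verify}$ is literally the same predicate and the two soundness games coincide — but your ``soundness is a property of the verifier, not the prover'' lens makes this immediate rather than hidden inside an unneeded hybrid. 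You also go beyond what the paper proves here: the paper's theorem (read in the context of the section preamble) is implicitly restricted to the linear-PCS regime and explicitly defers the hash-/FRI case, whereas you sketch that case too. Your sketch is reasonable in outline, but be careful: because the transcripts differ structurally there, the result is \emph{not} a reduction to $\Pi_{\mathrm{std}}$ as this theorem claims, and asserting ``the same asymptotic soundness error with only additive negligible loss'' requires re-running the baseline FRI/Merkle soundness analysis against the modified commitment layout rather than invoking Theorem~\ref{thm:soundness-preservation} itself; your knowledge-soundness aside is similarly plausible but outside the scope of what the paper proves.
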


\begin{proof}
We give a black-box reduction. The key insight is that \textbf{any accepting transcript $(x,\pi)$ from $\Pi_{\mathrm{str}}$ is an accepting transcript of $\Pi_{\mathrm{std}}$ under identical FS inputs}. 

$\mathcal{B}$ runs $\mathcal{A}$ while internally simulating the prover's computations exactly as in $\Pi_{\mathrm{str}}$. By Theorem~\ref{thm:equivalence}, the distribution of all aggregate commitments, openings, and auxiliary values output by $\Pi_{\mathrm{str}}$ is identical to those of $\Pi_{\mathrm{std}}$ on the same inputs and randomness. Moreover, because the same aggregate commitments are fed into Fiat--Shamir in the same order and encoding, the derived challenges are identical. Therefore any accepting transcript $(x,\pi)$ produced by $\mathcal{A}$ against $\Pi_{\mathrm{str}}$ is an accepting transcript against $\Pi_{\mathrm{std}}$. Hence $\Pr[\mathcal{B}\ \text{wins against}\ \Pi_{\mathrm{std}}]=\Pr[\mathcal{A}\ \text{wins against}\ \Pi_{\mathrm{str}}]=\varepsilon$.

The reduction preserves advantage and relies on the underlying assumptions of PCS binding and algebraic soundness of randomized checks. No additional loss (such as a factor $\mathrm{poly}(T)$) is introduced by streaming/tree evaluation.
\end{proof}

\paragraph{Remark.}
It is \emph{not} sufficient to assume PCS binding alone implies global soundness; standard AIR/Plonkish systems rely on both PCS binding and algebraic soundness of randomized checks. The theorem above preserves whatever soundness guarantee the base protocol provides.

\subsection{Zero-Knowledge}

We adopt the standard indistinguishability game for (non-interactive) zero-knowledge.

\begin{definition}[Zero-Knowledge Game]
$\mathsf{Game}_{\mathcal{A}}^{\mathrm{ZK}}(1^\lambda,C)$:
\begin{enumerate}
    \item The challenger generates $pp \leftarrow \mathsf{Setup}(1^\lambda, D)$ and gives $pp$ to $\mathcal{A}$.
    \item $\mathcal{A}$ outputs $(x,w)$ with $(x,w)\in R$.
    \item The challenger samples $b\leftarrow\{0,1\}$.
    \item If $b{=}0$: $\pi \leftarrow \mathsf{Prover}_{\mathrm{str}}(pp,x,w)$.
    \item If $b{=}1$: $\pi \leftarrow \mathsf{Sim}_{\mathrm{std}}(pp,x)$.
    \item $\mathcal{A}$ outputs $b'$, and wins if $b'=b$.
\end{enumerate}
\end{definition}

\begin{theorem}[Zero-Knowledge Preservation]
\label{thm:zk-preservation}
Suppose the underlying protocol $\Pi_{\mathrm{std}}$ is computational zero-knowledge (e.g., honest-verifier ZK for the IOP compiled via Fiat--Shamir, using a hiding PCS and standard masking). Then $\Pi_{\mathrm{str}}$ is computational zero-knowledge with the same advantage bounds.
\end{theorem}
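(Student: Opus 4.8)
The plan is to reduce zero-knowledge preservation entirely to the Tree Evaluation Equivalence (Theorem~\ref{thm:equivalence}): once we know that the proof produced by $\Pi_{\mathrm{str}}$ has \emph{exactly} the same distribution as the proof produced by $\Pi_{\mathrm{std}}$ on the same statement, witness, public parameters, and (programmed) random oracle, the simulator of the baseline works verbatim for the streaming prover, and the distinguishing advantage of any adversary is literally unchanged. Concretely, I would set $\mathsf{Sim}_{\mathrm{str}} := \mathsf{Sim}_{\mathrm{std}}$ and argue that for every PPT $\mathcal{A}$ and every $(x,w)\in R$,
\[
\big|\Pr[\mathcal{A}(\pi)=1 : \pi\leftarrow\mathsf{Prover}_{\mathrm{str}}(pp,x,w)] - \Pr[\mathcal{A}(\pi)=1 : \pi\leftarrow\mathsf{Prover}_{\mathrm{std}}(pp,x,w)]\big| = 0,
\]
because the two proof random variables are identically distributed; composing this with the baseline's computational indistinguishability $\mathsf{Prover}_{\mathrm{std}} \approx_c \mathsf{Sim}_{\mathrm{std}}$ yields $\mathsf{Prover}_{\mathrm{str}} \approx_c \mathsf{Sim}_{\mathrm{std}}$ with the \emph{same} advantage bound, which is the claim.

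The key steps, in order: (1) fix $pp\leftarrow\mathsf{Setup}(1^\lambda,D)$, $(x,w)\in R$, and the random-oracle answers; (2) invoke Theorem~\ref{thm:equivalence}(1) together with Lemma~\ref{lem:linear-agg} and Corollary~\ref{cor:identical-linear} to show the vector of trace/derived commitments emitted by $\Pi_{\mathrm{str}}$ has the same distribution as the baseline's---in the hiding instantiation each per-register blinder of $\mathcal{P}'$ is a sum of independent uniform block blinders, hence uniform, so the aggregated commitment and its opening are distributed exactly as a freshly blinded baseline commitment, while in the non-hiding instantiation commitments are deterministic and bit-identical; (3) apply Lemmas~\ref{lem:block-perm} and~\ref{lem:block-lookup} to conclude that the permutation and lookup accumulator columns (and their commitments, if committed) are reproduced with the baseline distribution, and that the coefficient-basis streaming of $Q$ aggregates to the baseline quotient commitment; (4) use the critical Fiat--Shamir invariant of the streaming schedule---only the \emph{aggregated} per-polynomial commitments are hashed, in the same order and encoding as the baseline---to conclude that the derived challenges $(\beta,\gamma),\alpha,(\zeta,\dots)$ are identically distributed, hence so are all evaluation claims and batched opening proofs, which are deterministic functions of the committed polynomials, the challenges, and the (identically distributed) blinders; (5) assemble: the full proof $\pi'$ of $\Pi_{\mathrm{str}}$ equals in distribution the proof $\pi$ of $\Pi_{\mathrm{std}}$, and then run the two-line argument above.

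The main obstacle---the only point needing care beyond bookkeeping---is ensuring that the \emph{masking used for zero-knowledge} is faithfully decomposable across blocks. Standard masking (appending a few uniformly random trace rows, or adding a random low-degree multiple of $Z_H$ to each committed polynomial) must be allocated to blocks so that the blockwise sum reconstructs the baseline masked polynomial with the correct \emph{joint} distribution; for additive $Z_H$-masking this is immediate---the mask lives in a single slice, or is split and re-summed exactly as the blinders are in Lemma~\ref{lem:linear-agg}---but it must be stated explicitly that the streaming prover samples the mask with the same distribution and exposes it through precisely the same transcript fields. A secondary caveat is that this clean distribution-identity argument is specific to \emph{linear} PCSs: for hash-/FRI-based instantiations the transcript structure differs, so one instead argues via a hybrid that the streaming prover's Merkle commitments are computed over leaf vectors with the same blinded distribution as the baseline's and then invokes the baseline FRI/hash-commitment zero-knowledge analysis; this introduces no new loss but relies on the same collision-resistance assumption used by the baseline. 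Once the masking decomposition is pinned down, zero-knowledge of $\Pi_{\mathrm{str}}$ is an immediate consequence of Theorem~\ref{thm:equivalence}.
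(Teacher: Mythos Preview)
Your proposal is correct and follows essentially the same approach as the paper: set $\mathsf{Sim}_{\mathrm{str}}:=\mathsf{Sim}_{\mathrm{std}}$, invoke Theorem~\ref{thm:equivalence} together with Lemma~\ref{lem:linear-agg} and the aggregate-only Fiat--Shamir invariant to conclude that $\mathsf{Prover}_{\mathrm{str}}$ and $\mathsf{Prover}_{\mathrm{std}}$ produce identically distributed transcripts, and then transfer the baseline ZK guarantee with zero loss. Your explicit discussion of how the ZK masking must be block-decomposed so that the summed mask has the baseline distribution is in fact more careful than the paper's own brief ``instantiation note,'' and your non-linear PCS caveat mirrors the paper's separate remark.
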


\begin{proof}
Set $\mathsf{Sim}_{\mathrm{str}}:=\mathsf{Sim}_{\mathrm{std}}$. The \textbf{same simulator for the baseline works because the distribution of FS inputs and outputs is identical} (bit-identical for non-hiding PCSs; distribution-identical for hiding PCSs).

By Theorem~\ref{thm:equivalence}, for any fixed public randomness (or programmed random oracle), the transcript \emph{distribution} (aggregate commitments and openings) of $\mathsf{Prover}_{\mathrm{str}}$ matches that of $\mathsf{Prover}_{\mathrm{std}}$; in non-hiding PCSs this equality is \emph{bit-for-bit}. In hiding PCSs, the per-block blinders sum to a uniform blinder per polynomial (Lemma~\ref{lem:linear-agg}), so commitment/opening distributions are unchanged. Consequently, the Fiat--Shamir challenge inputs have the \emph{same distribution}, and any distinguisher against $\Pi_{\mathrm{str}}$ yields one against $\Pi_{\mathrm{std}}$ with the same advantage.

\emph{Instantiation note.} If the PCS is not hiding by default (e.g., plain KZG), we assume a hiding instantiation or the standard polynomial-masking steps of the base protocol that render the overall scheme zero-knowledge. Our transformation merely reorders computation and compresses memory; it does not alter those steps.
\end{proof}

\paragraph{Remark (non-linear PCSs).}
For Merkle/FRI-style PCSs, the streaming prover remains sublinear-space, but transcripts generally differ (e.g., chunking and digest structure). Security still follows from the underlying commitment binding/collision resistance and the unchanged IOP analysis; our prover is a black-box reimplementation preserving those guarantees without claiming transcript identity. Note that $b_{\text{val}}$ may increase but still yields $o(T)$ space.

\bigskip

\section{Performance and Implications}

The sublinear-space prover does more than resolve a theoretical question; it alters the landscape of practical verifiable computation. In this section, we analyze its performance against existing systems and explore just a few new applications it enables.

\subsection{Performance Comparison}

The primary benefit of this construction is a \emph{square-root-space} improvement: from linear space $\Theta(T)$ down to
\[
O\!\big(\sqrt{T}+\log T\log\log T\big)\;=\;O\!\big(\sqrt{T}\cdot \mathrm{polylog}\,T\big)
\]
(Section~5), i.e., a factor of $\approx \sqrt{T}$ less memory up to lower-order polylogarithmic terms.\footnote{Throughout, lower-order polylogarithmic terms are made explicit via Theorem~\ref{thm:space-complexity} and the Cook--Mertz stack analysis.} To illustrate the practical impact, we compare the streaming prover against a standard linear-space prover for a ZKP system that materializes the full execution trace in memory.

\begin{table}[h!]
\centering
\caption{Performance Comparison: Linear-Space vs. Sublinear-Space Prover}
\label{tab:comparison}
\begin{tabular}{@{}lll@{}}
\toprule
\textbf{Metric} & \textbf{Standard Linear-Space Prover} & \textbf{The Sublinear-Space Prover} \\ \midrule
Prover Space & $\Theta(T)$ & $O\!\big(\sqrt{T}+\log T\log\log T\big)$\textsuperscript{$\ast$} \\
Prover Time & $O(T\log N)$ or $O(T\log T)$\textsuperscript{$\dagger$} & $O(T\log N)\cdot \mathrm{poly}(\lambda)$\textsuperscript{$\S$} \\
Proof Size & $O(\mathrm{poly}(\lambda,\log N))$ & $O(\mathrm{poly}(\lambda,\log N))$ (identical\textsuperscript{$\ddagger$}) \\
Verifier Time & $O(\mathrm{poly}(\lambda,\log N))$ & $O(\mathrm{poly}(\lambda,\log N))$ (identical\textsuperscript{$\ddagger$}) \\
Setup Model & System-dependent (e.g., CRS/SRS) & System-dependent (identical) \\
\bottomrule
\end{tabular}

\medskip
\raggedright \textsuperscript{$\ast$}\,Space bound from Theorem~\ref{thm:space-complexity}; blocking by $b_{\text{blk}}$ also improves cache locality. \\
\raggedright \textsuperscript{$\dagger$}\,Prover time is $O(T\log N)$ for KZG/IPA (evaluation-basis MSMs) or $O(T\log T)$ for STARK/FRI. \\
\raggedright \textsuperscript{$\S$}\,From Theorem~\ref{thm:space-complexity}: constant number of streaming passes plus polylog traversal stack. \\
\raggedright \textsuperscript{$\ddagger$}\,"Identical" holds for \emph{linear} PCSs (e.g., KZG/IPA) when (i) the same basis/SRS and encoding are used, and (ii) only the \emph{aggregate per-polynomial commitments} (not per-block commitments) are hashed into Fiat--Shamir, in the same order as the baseline. For hash-based STARK/FRI commitments, the prover remains sublinear-space but transcripts (and some constants) generally differ.
\end{table}

\noindent
This construction dramatically improves the prover's memory footprint while keeping time complexity comparable up to polylogarithmic factors. For linear PCS instantiations with the aggregate-only-into-FS discipline, existing verifier code and parameters remain completely unchanged, enabling drop-in replacement without affecting end-users or protocol security parameters.

\subsection{Concrete Impact on System Requirements}

To contextualize the magnitude of the space improvement, consider two representative computation sizes. We assume each trace element occupies 32 bytes (e.g., a 256-bit field element), and the true bound is $O\!\big(k \cdot \sqrt{T}+\log T\log\log T\big)$ with $k=O(1)$ registers. With $\lambda=\Theta(\log T)$, the dominant space bound is $O\!\big(\sqrt{T}+\log T\log\log T\big)$ (Section~5).

\textbf{Scenario 1: Moderately large computation ($T = 2^{24} \approx 1.68\times 10^7$)}
\begin{itemize}
    \item \textbf{Linear-space prover:} $2^{24}\!\times 32 = 536{,}870{,}912$ bytes $\approx \mathbf{537}$ MB (decimal).
    \item \textbf{Streaming prover:} $\sqrt{2^{24}}\!\times 32 = 2^{12}\!\times 32 = 131{,}072$ bytes $\approx \mathbf{0.13}$ MB\;+\;a small polylog overhead (well below $1$ MB at this scale).
\end{itemize}

\textbf{Scenario 2: Very large computation ($T = 2^{30} \approx 1.07\times 10^9$)}
\begin{itemize}
    \item \textbf{Linear-space prover:} $2^{30}\!\times 32 = 34{,}359{,}738{,}368$ bytes $\approx \mathbf{34.4}$ GB (decimal).
    \item \textbf{Streaming prover:} $\sqrt{2^{30}}\!\times 32 = 2^{15}\!\times 32 = 1{,}048{,}576$ bytes $\approx \mathbf{1.0}$ MB\;+\;a small polylog overhead (still $\ll$ the $\sqrt{T}$ term).
\end{itemize}

These examples highlight how this work makes verifiable computation accessible in scenarios where it was previously infeasible, moving from a specialized, server-bound task to one that can be performed on everyday devices.

\subsection{Enabling New Deployment Scenarios}

The removal of the linear-space barrier opens up several new frontiers for ZKP applications:

\begin{itemize}
    \item \textbf{On-Device Proving:} Mobile phones, IoT devices, and embedded systems can now generate proofs for complex computations, enabling applications like privacy-preserving health monitoring, on-device ML inference verification, and secure authentication without powerful backend servers.
    \item \textbf{Democratizing Trust:} Lowering the hardware barrier allows a wider range of participants to generate proofs in decentralized networks (e.g., as provers in a ZK-rollup), enhancing security and decentralization.
    \item \textbf{Verifying Extremely Large Computations:} Scientific simulations, large-scale data-processing pipelines, and complex financial models can now be made verifiable, as memory requirements no longer scale linearly with runtime.
\end{itemize}

\subsection{Limitations and Future Work}

This work presents both a significant advance and several exciting avenues for future research. We emphasize three critical caveats:
\begin{itemize}
    \item \textbf{Transcript identity requirements:} Identical transcripts require linear PCS + basis/SRS match + aggregate-only Fiat--Shamir. Violating these conditions changes challenges and thus transcripts.
    \item \textbf{Hash-based PCS limitations:} For STARK/FRI-style commitments, transcripts generally differ due to chunking and digest structure, though sublinear space is retained.
    \item \textbf{Constant factors at small scale:} For small $T$, constants, I/O costs, or setup overhead may dominate the asymptotic $O(\sqrt{T})$ improvement.
\end{itemize}

Additional research directions include:
\begin{itemize}
    \item \textbf{Tightening Space Bounds:} Can constants and lower-order terms be further reduced? Can we achieve the log-space tree evaluation goal? Are matching lower bounds possible under general read-degree $r_{\text{reg}}$ and register count $k$?
    \item \textbf{Time Complexity Optimization:} Custom tree-evaluation routines tailored to AIR/PCS structure could lower the polylog factors.
    \item \textbf{Post-Quantum Extensions:} Developing PQ-friendly transcript-matching techniques for hash-based commitments while quantifying the exact overhead in $b_{\text{val}}$ and maintaining sublinear space.
    \item \textbf{Beyond Proof Generation:} The principle of decomposing cryptographic tasks into space-efficient tree evaluations may extend to other primitives (e.g., VDFs, interactive proofs), suggesting a broader algorithmic toolkit.
\end{itemize}

\bigskip

\section{Conclusion}

We have presented, to our knowledge, the first zero-knowledge proof system with \emph{sublinear} prover space, breaking a long-standing barrier in verifiable computation. The core technical contribution behind this work is the Tree Evaluation Equivalence (Theorem~\ref{thm:equivalence}), which establishes a mathematical equivalence between cryptographic proof generation and the \texttt{Tree Evaluation} problem. By combining the structural decomposition of Williams with a PCS whose commit map is linear, we express ZKP commitment generation as a recursive function $F_v$ over a tree, then evaluate it in low space using the Cook--Mertz algorithm. This yields a space improvement from $\Theta(T)$ to
\[
O\!\big(\sqrt{T}+\log T\log\log T\big)\;=\;O\!\big(\sqrt{T}\cdot\mathrm{polylog}\,T\big).
\]

The key algorithmic innovation is the commitment tree function $F_v$ (Algorithm~\ref{alg:commit-func}), which composes cryptographic commitments while maintaining consistency via auxiliary channels. This construction replaces monolithic trace materialization with a space-efficient recursive evaluation that never stores the complete execution trace.

For linear PCS instantiations (e.g., KZG/IPA) and when the same basis/SRS and encodings are used, we further enforce that \emph{only} the aggregate per-polynomial commitments (not per-block commitments) are hashed into Fiat--Shamir in the same order as the baseline; under these conditions, the proof transcript distribution is \emph{identical} (bit-for-bit in non-hiding variants). For STARK/FRI-style hash commitments, the prover remains sublinear-space but transcripts typically differ.

Finally, the prover executes a \emph{constant} number of streaming passes (wires, permutation, quotient, openings), each linear in $T$, so time remains $O(T\cdot\mathrm{polylog}\,N)$ for KZG/IPA (and $O(T\log T)$ for FRI/STARK), while the Cook--Mertz traversal contributes only a polylogarithmic additive stack term due to the constant arity of our computation tree.

The ability to generate proofs in sublinear space makes verifiable computation practical on resource-constrained devices and for computations of a scale previously thought infeasible. As our digital world increasingly demands strong guarantees of privacy and integrity, this work can provide a critical tool for making these guarantees accessible, affordable, and ubiquitous, while building a lasting bridge between computational complexity theory and applied cryptography.

\medskip
\noindent\textbf{Acknowledgements.}
We gratefully acknowledge Williams, and Cook $\&$ Mertz for their landmark results on square-root-space time--space simulations and the tree-evaluation perspective; this manuscript builds directly on these insights and would not exist without them. We further disclose that the exploration, drafting, and revision of this manuscript were conducted with the assistance of large language models (LLMs) for wording, copyediting, and organization; the authors bear sole responsibility for any errors in technical claims, constructions, and proofs. The authors declare that they have no conflicts of interest and received no external funding for this work.

\medskip
\noindent\textbf{Code Availability.}
A reference implementation of the sublinear-space zero-knowledge proof system described in this work is available at:

\vspace{0.1cm}
\noindent\url{https://github.com/logannye/space-efficient-zero-knowledge-proofs}.

\newpage



\begin{thebibliography}{99}

\bibitem{groth16} Groth, J. (2016). On the Size of Pairing-Based Non-interactive Arguments. \textit{Advances in Cryptology -- EUROCRYPT 2016}, 305-326.

\bibitem{ben-sasson2018} Ben-Sasson, E., Bentov, I., Horesh, Y., \& Riabzev, M. (2018). Scalable, Transparent, and Post-Quantum Secure Computational Integrity. \textit{IACR Cryptology ePrint Archive}, 2018/046.

\bibitem{bulletproofs} Bünz, B., Bootle, J., Boneh, D., Poelstra, A., Wuille, P., \& Maxwell, G. (2018). Bulletproofs: Short Proofs for Confidential Transactions and More. \textit{Proceedings of the 2018 IEEE Symposium on Security and Privacy}, 315-334.

\bibitem{cookmertz2024} Cook, J. \& Mertz, I. (2024). Tree evaluation is in space O(log n · log log n). \textit{Proceedings of the 56th Annual ACM Symposium on Theory of Computing}, 1268-1278.

\bibitem{williams2025} Williams, R. R. (2025). Simulating Time With Square-Root Space. \textit{Proceedings of the 57th Annual ACM Symposium on Theory of Computing}, to appear.

\bibitem{kzg2010} Kate, A., Zaverucha, G. M., \& Goldberg, I. (2010). Constant-Size Commitments to Polynomials and Their Applications. \textit{Advances in Cryptology -- ASIACRYPT 2010}, 177-194.

\bibitem{HPV75} Hopcroft, J., Paul, W., \& Valiant, L. (1975). On Time Versus Space and Related Problems. \textit{Proceedings of the 16th Annual Symposium on Foundations of Computer Science}, 57-64.

\end{thebibliography}
\end{document}